\newcommand{\revise}[1]{{{\color{black}#1}}}
\newif\ifcomment
\newcommand{
	% This file was created by matplotlib2tikz v0.6.17.
\begin{tikzpicture}

\definecolor{color0}{rgb}{0.12156862745098,0.466666666666667,0.705882352941177}

\begin{axis}[
	%height = 150pt,
	%width = 0.45\textwidth,
ylabel={number of instances},
xmin=-0.69, xmax=5.69,
ymin=0, ymax=1152.9,
xtick={0.5,1.5,2.5,3.5,4.5,5.5},
xticklabels={0,1e-05,0.0001,0.001,1,},
tick align=outside,
xticklabel style = {rotate=30},
tick pos=left,
x grid style={white!69.01960784313725!black},
y grid style={white!69.01960784313725!black}
]
\draw[fill=color0,draw opacity=0] (axis cs:-0.4,0) rectangle (axis cs:0.4,30);
\draw[fill=color0,draw opacity=0] (axis cs:0.6,0) rectangle (axis cs:1.4,107);
\draw[fill=color0,draw opacity=0] (axis cs:1.6,0) rectangle (axis cs:2.4,1098);
\draw[fill=color0,draw opacity=0] (axis cs:2.6,0) rectangle (axis cs:3.4,420);
\draw[fill=color0,draw opacity=0] (axis cs:3.6,0) rectangle (axis cs:4.4,109);
\draw[fill=color0,draw opacity=0] (axis cs:4.6,0) rectangle (axis cs:5.4,256);
\node at (axis cs:-0.1,40)[
  scale=0.5,
  anchor=base west,
  text=black,
  rotate=0.0
]{ 30};
\node at (axis cs:0.9,117)[
  scale=0.5,
  anchor=base west,
  text=black,
  rotate=0.0
]{ 107};
\node at (axis cs:1.9,1108)[
  scale=0.5,
  anchor=base west,
  text=black,
  rotate=0.0
]{ 1098};
\node at (axis cs:2.9,430)[
  scale=0.5,
  anchor=base west,
  text=black,
  rotate=0.0
]{ 420};
\node at (axis cs:3.9,119)[
  scale=0.5,
  anchor=base west,
  text=black,
  rotate=0.0
]{ 109};
\node at (axis cs:4.9,266)[
  scale=0.5,
  anchor=base west,
  text=black,
  rotate=0.0
]{ 256};
\end{axis}

\end{tikzpicture}

}{
	% This file was created by matplotlib2tikz v0.6.17.
\begin{tikzpicture}

\definecolor{color0}{rgb}{0.12156862745098,0.466666666666667,0.705882352941177}

\begin{axis}[
	%height = 150pt,
	%width = 0.45\textwidth,
ylabel={number of instances},
xmin=-0.69, xmax=5.69,
ymin=0, ymax=1152.9,
xtick={0.5,1.5,2.5,3.5,4.5,5.5},
xticklabels={0,1e-05,0.0001,0.001,1,},
tick align=outside,
xticklabel style = {rotate=30},
tick pos=left,
x grid style={white!69.01960784313725!black},
y grid style={white!69.01960784313725!black}
]
\draw[fill=color0,draw opacity=0] (axis cs:-0.4,0) rectangle (axis cs:0.4,30);
\draw[fill=color0,draw opacity=0] (axis cs:0.6,0) rectangle (axis cs:1.4,107);
\draw[fill=color0,draw opacity=0] (axis cs:1.6,0) rectangle (axis cs:2.4,1098);
\draw[fill=color0,draw opacity=0] (axis cs:2.6,0) rectangle (axis cs:3.4,420);
\draw[fill=color0,draw opacity=0] (axis cs:3.6,0) rectangle (axis cs:4.4,109);
\draw[fill=color0,draw opacity=0] (axis cs:4.6,0) rectangle (axis cs:5.4,256);
\node at (axis cs:-0.1,40)[
  scale=0.5,
  anchor=base west,
  text=black,
  rotate=0.0
]{ 30};
\node at (axis cs:0.9,117)[
  scale=0.5,
  anchor=base west,
  text=black,
  rotate=0.0
]{ 107};
\node at (axis cs:1.9,1108)[
  scale=0.5,
  anchor=base west,
  text=black,
  rotate=0.0
]{ 1098};
\node at (axis cs:2.9,430)[
  scale=0.5,
  anchor=base west,
  text=black,
  rotate=0.0
]{ 420};
\node at (axis cs:3.9,119)[
  scale=0.5,
  anchor=base west,
  text=black,
  rotate=0.0
]{ 109};
\node at (axis cs:4.9,266)[
  scale=0.5,
  anchor=base west,
  text=black,
  rotate=0.0
]{ 256};
\end{axis}

\end{tikzpicture}

}
\newcommand{\vartableOne}{
\begin{tabular}{r|rr}
	\multicolumn{3}{c}{$d = 8$, $t = 20$}\\
	$n$ & bit size & time\\
	\hline
	2 & 27723 & 5.82\\
	3 & 23572 & 4.99\\
	4 & 22965 & 4.83\\
	8 & 5678 & 1.12\\
	10 & 1749 & 0.35\\
\end{tabular}}
\newcommand{\vartableTwo}{
\begin{tabular}{r|rr}
	\multicolumn{3}{c}{$d = 10$, $t = 24$}\\
	$n$ & bit size & time\\
	\hline
	2 & 38872 & 8.26\\
	3 & 33938 & 7.26\\
	4 & 31278 & 6.34\\
	8 & 7042 & 1.42\\
	10 & 3778 & 0.91\\
\end{tabular}}
\newcommand{\vartableThree}{
\begin{tabular}{r|rr}
	\multicolumn{3}{c}{$d = 10$, $t = 30$}\\
	$n$ & bit size & time\\
	\hline
	2 & 61198 & 12.41\\
	3 & 57833 & 11.81\\
	4 & 53596 & 10.82\\
	8 & 13343 & 2.55\\
	10 & 6974 & 1.41\\
\end{tabular}}
\newcommand{\tableDegreeTerm}{
\begin{tabular}{c|ccccccc}
	$t\setminus d$ & 6 & 8 & 10 & 18 & 20 & 26 & 28\\
	\hline
	6 & \makecell{912\\0.24} & \makecell{1000\\0.26} & \makecell{1002\\0.26} & \makecell{1170\\0.28} & \makecell{1014\\0.26} & \makecell{955\\0.28} & \makecell{900\\0.26}\\
	\hline
	9 & \makecell{2731\\0.66} & \makecell{2673\\0.65} & \makecell{2808\\0.70} & \makecell{2890\\0.68} & \makecell{2621\\0.61} & \makecell{3166\\0.82} & \makecell{2471\\0.62}\\
	\hline
	12 & \makecell{5599\\1.30} & \makecell{6054\\1.40} & \makecell{5449\\1.27} & \makecell{5747\\1.27} & \makecell{5478\\1.21} & \makecell{6007\\1.53} & \makecell{5027\\1.18}\\
	\hline
	20 & \makecell{9990\\2.26} & \makecell{24078\\5.08} & \makecell{20985\\4.40} & \makecell{21364\\4.43} & \makecell{19324\\3.96} & \makecell{24096\\5.23} & \makecell{17210\\3.59}\\
	\hline
	24 & $\times$ & \makecell{36301\\7.62} & \makecell{33414\\6.99} & \makecell{37080\\7.49} & \makecell{29266\\5.87} & \makecell{37618\\7.87} & \makecell{28090\\5.43}\\
	\hline
	30 & $\times$ & \makecell{57744\\11.90} & \makecell{56354\\11.44} & \makecell{61564\\12.57} & \makecell{48622\\9.32} & \makecell{59975\\12.76} & \makecell{55000\\10.80}\\
	\hline
	50 & $\times$ & $\times$ & \makecell{180971\\36.11} & \makecell{174464\\34.64} & \makecell{146218\\27.80} & \makecell{196511\\38.19} & \makecell{183598\\38.36}\\
	\hline
\end{tabular}}
\newcommand{\tableSoncSageAlt}{
\begin{tabular}{r|rr|rr}
	$t$ & bit size SONC & bit size SAGE & time SONC & time SAGE\\
\hline
	6 & 432 & 1005 & 0.06 & 0.26 \\
	9 & 806 & 2696 & 0.19 & 0.66 \\
	12 & 1261 & 5568 & 0.37 & 1.29 \\
	20 & 2592 & 19203 & 0.64 & 4.00 \\
	24 & 3826 & 32543 & 0.97 & 6.66 \\
	30 & 5029 & 53160 & 1.34 & 10.58 \\
	50 & 10622 & 167971 & 3.95 & 32.78 \\
\end{tabular}}
\newcommand{\degrees}{6,8,10,18,20,26,28}
\newcommand{\Myterms}{6,9,12,20,24,30,50}
\newcommand{\variables}{2,3,4,8,10}
\newcommand{\instances}{2020}
\newcommand{\totalTime}{6780.0}
\newcommand{\roundFaster}{8}
\newcommand{\roundRelLow}{21.6}
\newcommand{\roundRelAvg}{88.6}
\newcommand{\roundRelHigh}{96.8}
\newcommand{\roundBetter}{30}
\newcommand{\roundBad}{256}
\newcommand{\roundOkayPercentage}{81.9}
\newcommand{\roundOkayThreshold}{0.001}
\newcommand{\roundBadThreshold}{1}
\begin{document}

\author{Victor Magron}

\address{Victor Magron, CNRS LAAS, 7 avenue du Colonel Roche, F-31031 Toulouse C\'edex 4, France\medskip}

\email{victor.magron@laas.fr }

\author{Henning Seidler} 

\address{Henning Seidler, Technische Universit\"at Berlin, Institut f\"ur Mathematik, Stra{\ss}e des 17.~Juni 136, 10623 Berlin, Germany\medskip}

\email{seidler@math.tu-berlin.de}

\author{Timo de Wolff}

\address{Timo de Wolff, Technische Universit\"at Berlin, Institut f\"ur Mathematik, Stra{\ss}e des 17.~Juni 136, 10623 Berlin, Germany\medskip}

\email{dewolff@math.tu-berlin.de}

%AMS and (if it applies) ACM subject classification
\subjclass[2010]{Primary: 14P10, 68W30, 90C25; Secondary: 14Q20, 68R01
\textit{ACM Subject Classification:} Mathematical software performance}

\keywords{nonnegative circuit polynomial, arithmetic-geometric-mean exponential,  convex optimization, geometric programming, relative entropy programming, exact certificate, rounding-projection procedure, hybrid numeric-symbolic algorithm, real algebraic geometry.}

\date{\today}
\title[Exact optimization via SONC and SAGE]{Exact Optimization via Sums of Nonnegative Circuits and Sums of AM/GM Exponentials}
% arithmetic-geometric-mean-exponentials
\maketitle

\begin{abstract}
	We provide two hybrid numeric-symbolic optimization algorithms, computing exact sums of nonnegative circuits (SONC) and sums of arithmetic-geometric-exponentials (SAGE) decompositions. Moreover, we provide a hybrid numeric-symbolic decision algorithm for polynomials lying in the interior of the SAGE cone.
	Each framework, inspired by previous contributions of Parrilo and Peyrl, is a rounding-projection procedure. 

	For a polynomial lying in the interior of the SAGE cone, we prove that the decision algorithm terminates within a number of arithmetic operations, which is \revise{polynomial in the number of terms of the input, and linear in the distance  to the boundary of the cone}.
	We also provide experimental comparisons regarding the implementation of the two optimization algorithms. 
\end{abstract}

\section{Introduction}
In this paper, we focus on \struc{\textit{certifying}} the output of polynomial optimization problems in a rigorous way. 
Finding the minimal value of a given polynomial in $n$ variables under polynomial constraints is known to be NP-hard in general \cite{Laurent:Survey}. 
The related problem of deciding \struc{\textit{nonnegativity}} of a polynomial under polynomial constraints is co-NP hard; see e.g., \cite{blum}.
\He{theorem number?}
This decision problem can be solved with the Cylindrical Algebraic Decomposition algorithm \cite{Collins75}, which runs in time doubly exponential in $n$ and polynomial in the maximal total degree $d$ of the input functions.
Further improved algorithms \cite{GV88,BPR98,BGHM3}, relying on critical point methods, allow to decide nonnegativity in singly exponential time in $n$. 
More generally, the complement of the problem lies in the existential theory of the reals, which can be solved in polynomial space and single exponential time \cite{Renegar:PSPACE}.
Safe validation of optimization problem results is  mandatory for guaranted evaluation of mathematical functions \cite{Chevillard11}, certified roundoff error bounds \cite{MagronToms17} or  computer assisted proofs \cite{MagronJFR14,pi17}.

In contrast to the algorithms mentioned above, several numerical frameworks have been developed in the last two decades. 
In the unconstrained case, one way to ensure nonnegativity of a given polynomial $f$ is to decompose $f$ as a \textit{\struc{sum of squares} (SOS)} of polynomials \cite{phdParrilo,Lasserre01},
% , that is to compute a list of polynomials $[h_1,\dots,h_\ell]$ such that $f = h_1^2 + \dots + h_\ell^2$.
which provides a certificate that $f$ is nonnegative over the reals. An SOS decomposition can be computed by solving a \struc{\textit{semidefinite program} (SDP)} of size $\binom{n+d}{n}$.
In the constrained case, certificates can be provided by the prominent \struc{\emph{moment-SOS hierarchy}}, also called \struc{\textit{Lasserre's hierarchy}} \cite{Lasserre01,Lasserre:Book:CPOPApplications}. 
% % % This is a hierarchy of convex relaxations, yielding  a  sequence of lower bounds for polynomials under polynomial constraints.  
% % % Under mild assumptions, slightly stronger than compactness of the set of constraints, this sequence converges to the solution of the initial problem. 
Each relaxation is solved with a semidefinite programming solver, implemented in finite-precision arithmetic, whose output is an \emph {approximate} certificate. 
A drawback of these methods is that the size $\binom{n+d}{n}$  of the SDP matrices blows up when the degree $d$ and number of variables $n$ increases.

%For larger values of $n$, a remedy consists of exploiting a potential sparsity/symmetry pattern arising in the input polynomials. We refer to~\cite{Waki06SparseSOS,Las06SparseSOS} for more details about the sparse version of Lasserre's hierarchy, to~\cite{Riener2013SymmetricSDP} for the symmetric version and to~\cite{weisser2018sparse} for the bounded degree SOS hierarchy.
%
For larger values of $n$, a remedy consists of exploiting a potential sparsity/symmetry pattern arising in the input polynomials.
A sparse version of Lasserre's hierarchy has been developed in~\cite{Waki06SparseSOS,Las06SparseSOS} when the objective function can be written as a sum of polynomials, each of them involving a small number of variables. 
The framework from~\cite{Riener2013SymmetricSDP} allows to take into account the symmetries of the  polynomial optimization problem.
One can also rely on the so-called \struc{\emph{bounded degree SOS hierarchy} (BSOS)}~\cite{lasserre2017bounded}. 
In this hierarchy, positive polynomials are represented as a sum of two terms. 
The first term is an SOS polynomial of degree fixed in advance, while the second one belongs to the set of \struc{\emph{Krivine-Stengle}} representations, that is, is a finite combination of positive scalar weights and cross-products of the polynomials defining the set of constraints.
This allows to handle larger instances than with the standard SOS hierarchy. 
The sparse variant of the BSOS hierarchy~\cite{weisser2018sparse} can handle even bigger problems, under the same sparsity pattern assumptions than the ones used for the sparse SOS hierarchy.
If the support, i.e., number of monomial terms, of the polynomials is small in comparison to $\binom{n+d}{n}$, alternative relaxations based on \struc{\textit{geometric programming (GP)}}~\cite{Duffin:Peterson:Zener:Book}, and, more generally, \struc{\textit{relative entropy programming (REP)}}, potentially allow to obtain lower bounds in a more efficient way than SDP relaxations. 
Both GP and REP are (equivalent to) convex optimization problems over the exponential cone.
%When performing minimization of an SOS polynomial, the lower bound obtained with SDP is always more accurate than the ones obtained with GP/REP. 
These alternative relaxations also provide the possibility to obtain answers when the SDP relaxations cannot be implemented because their size is too large for state-of-the art SDP solvers.

A first class of alternative  certificates is given by \struc{\emph{sums of nonnegative circuit (SONC) polynomials}}.  
A \struc{\textit{circuit polynomial}} is a polynomial with support containing only monomial squares, at the exception of at most one term, whose exponent is a strict convex combination of the other exponents. 
In~\cite{Iliman:deWolff:Circuits}, the authors derive a necessary and sufficient condition to prove that a given circuit polynomial is nonnegative.
When the input polynomial has a more general support, a first attempt is given in~\cite{Ghasemi:Marshall:GP,Ghasemi:Marshall:GP:Semialgebraic} to compute lower bounds while relying on GP.
This approach is generalized in~\cite{Dressler:Iliman:deWolff:SONCApproachConstraints} to compute SONC certificates when the set of constraints is defined as a finite conjunction of polynomial inequalities. In \cite{Dressler:Iliman:deWolff:Positivstellensatz} the authors provide a bounded degree hierarchy, which can be computed via relative entropy programming.
In the recent contribution~\cite{seidler2018experimental}, the second and the third author develop an algorithm computing SONC certificates for sparse unconstrained polynomials with arbitrary support, together with a software library~\cite{poem:software}, called \struc{\textsc{POEM} \textit{(Effective Methods in Polynomial Optimization)}}. 
Although this framework yields a very efficient way to obtain a lower bound for a given polynomial, a drawback is that it currently remains unclear whether the number of circuits involved in a SONC relaxation is exponential in the number of terms of this polynomial. 
A second class of alternative certificates is given by \struc{\emph{sums of arithmetic-geometric-mean-exponentials} (SAGE)} polynomials. 
An \struc{\textit{AGE polynomial}} refers to a \struc{\emph{signomial}}, i.e., a weighted sum of exponentials composed with linear functionals of the variables, which is globally nonnegative with at most one negative coefficient.
The framework from~\cite{Chandrasekaran:Shah:SAGE} derives a hierarchy of convex relaxations providing a sequence of increasing lower bounds for the optimal value of signomial programs (SP). 
For an input polynomial belonging to the SAGE cone, one can compute a SAGE decomposition by solving an REP, involving linear and relative entropy functions.
Furthermore, it is shown in~\cite[Theorem~20]{murray2018newton} that the cones of SAGE and SONC polynomials are related through their equivalence in terms of extreme rays. 
Namely, the extreme rays of the SAGE cone are supported on either a single coordinate or a set of coordinates inducing a simplicial circuit (a circuit with $\ell$ elements containing $\ell - 1$ extreme points). 
Hence, both cones contain the same polynomials.

However, these alternative schemes share the same certification issues than the ones based on SDP relaxations. 
GP/REP/SP solvers rely on interior-point algorithms, implemented in finite-precision. 
Thus, they output only approximate certificates. 
In the unconstrained case, the \struc{\textit{rounding-projection procedure}} by Peryl and Parrilo~\cite{PaPe08} allows to compute a weighted rational SOS decompositon of a polynomial $f$ of degree $d = 2k$, belonging to the interior of the SOS cone. 
In the ``rounding'' step, the algorithm computes an \struc{\emph{approximate Gram matrix}} of $f$, i.e., a matrix $\tilde{\Matrix{G}}$ such that $f \simeq \Vector{v}_k^T \tilde{\Matrix{G}} \Vector{v}_k$, where $\Vector{v}_k$ is the vector of all monomials of degree at most $k$, then rounds $\tilde{\Matrix{G}}$ in the space of rational matrices. 
In the ``projection'' step, the algorithm performs an orthogonal projection to obtain a matrix $\Matrix{G}$, such that $f = \Vector{v}_k^T \Matrix{G} \Vector{v}_k$. With sufficient precision digits, there always exists an SDP matrix fulfilling the above equality, yielding an (exact) Gram matrix associated to $f$.
The last step to retrive an exact weighted SOS decomposition for $f$ consists of performing an exact $\Matrix{L} \Matrix{D} \Matrix{L}^T$ procedure~\cite[\S~4.1]{Golub96}. 
Another framework~\cite{multivsos18,magron2018exact}, provides a hybrid numeric-symbolic framework, which computes exact SOS decompositions under the same assumptions. 
This is based on a ``perturbation-compensation'' algorithm.
In the ``perturbation'' step, one considers an arbitrary small perturbation of the input polynomial, and computes an approximate SOS decomposition with an SDP solver.  
In the ``compensation'' step, one relies on the perturbation terms to recover an exact SOS  decomposition. 
By comparison with the rounding-projection procedure, this algorithm perturbates the input and provides an approximate $\Matrix{L} \Matrix{D} \Matrix{L}^T$ decomposition of the approximate Gram matrix, instead of a projection.
It is shown in~\cite{magron2018exact} that both procedures have a boolean running time, which is singly exponential in $n$ and polynomial in $d$. 
Practical experiments emphasize that the bit size of the SOS outputs obtained with the rounding-projection algorithm is often larger than the one obtained with the perturbation-compensation algorithm. 
The perturbation-compensation algorithm is inspired from  prior work~\cite{univsos}, focusing on weighted SOS decompositions for nonnegative univariate polynomials, where the algorithm from~\cite{Chevillard11} is formalized and analyzed. 
In the unconstrained case, the framework from~\cite{magron2018exact} also extends the perturbation-compensation and rounding-projection algorithms to compute exact Polya and Hilbert-Artin's representations, respectively for positive definite forms and nonnegative polynomials, yielding decompositions into SOS of rational functions, under the assumption that the numerator belongs to the interior of the SOS cone. 
In the constrained case, further algorithms are proposed to derive exact rational Putinar's representations for positive polynomials over basic compact semialgebraic sets.
All corresponding algorithms are integrated in the \struc{$\realcertify$}~\cite{RealCertify} Maple library.

The motivation of the present work is to improve the scalability of these existing certification frameworks, especially for large-size problems, which are currently out of reach when relying on SOS-based methods.

\paragraph{\textbf{Contributions}} 
In this paper, we provide a hybrid numeric-symbolic framework, in a similar spirit as~\cite{PaPe08}, to certify exactly lower bounds obtained after computing SONC/SAGE decompositions with GP/REP relaxations. 
The resulting rounding-projection algorithms allow to handle unconstrained polynomial problems with such exact rational  decompositions.
Our first contributions, given in~\cref{section:post_processing}, are two procedures, called \struc{$\optsonc$} and \struc{$\optsage$}, providing exact rational SONC and SAGE decompositions, respectively.
These two algorithms allow to certify exactly lower bounds of unconstrained polynomials.
Our framework is inspired from~\cite{PaPe08}, first by rounding the output of a given GP/REP relaxation into rational numbers, next by performing an appropriate scaling of these numbers to obtain a solution satisfying exactly the (in)-equality constraints of the relaxation. 
We present another rounding-projection procedure called \struc{$\intsage$} in~\cref{section:exact_SAGE} to handle the case of polynomials belonging to the interior of the SAGE cone.
%Further extension to the constrained case is proposed in~\cref{section:exact_SAGE} to handle positive polynomials on basic compact semialgebraic sets.
When the input is an $n$-variate polynomial of degree $d$ with $t$ monomial terms, integer coefficients of maximum bit size $\tau$, and with a distance to the SAGE cone of bit size upper bounded by $\tau_{\sage}$, we prove that Algorithm~$\intsage$ outputs SAGE decompositions within \revise{$\bigo{( (\tau_{\sage}+\tau + t \log t ) \cdot  t^7 \log t )}$ } arithmetic operations.
This is in contrast with the decision algorithm $\intsos$ from~\cite{magron2018exact}, which certifies nonnegativity of  polynomials lying in the interior of the SOS cone in boolean time \revise{$\tau^2 \cdot d^{d^{\bigo{(n)}}}$}.
The two optimization algorithms $\optsonc$ and $\optsage$ are available within the POEM software library. 
In~\cref{section:benchs}, we provide experimental comparisons of these two algorithms. %with SOS-based approaches from~\cite{PaPe08,magron2018exact}, as well as with  methods based on critical points and cylindrical algebraic decomposition. 

\subsection*{Acknowledgements} Victor Magron benefited from the support of the FMJH Program PGMO (EPICS project) and  EDF, Thales, Orange et Criteo.  Timo de Wolff and Henning Seidler are supported by the DFG grant WO 2206/1-1. The authors would like to specially acknowledge the help of Riley Murray for providing insights about the barrier complexity of relative entropy programming.

\section{Preliminaries}
\label{section:background}
Let $\struc{\Z}$ be the set of integers and let $\struc{\R}$, $\struc{\R_{\geq 0}}$ and $\struc{\R_{> 0}}$ be the set of real, nonnegative real and positive real numbers, respectively.
With $\struc{\Q}$ being the set of rational numbers, one defines similarly $\struc{\Q_{>0}}$, $\struc{\Q_{\geq 0}}$.
The \struc{\textit{bit size}} of $i \in \Z$ is denoted by $\struc{\tau(i)} := \lfloor \log_2 (|i|) \rfloor + 1$ with $\struc{\tau(0)} := 1$.  
Given $i \in \Z$ and $j \in \Z \backslash \{0\}$ with gcd$(i,j) = 1$, we define $\struc{\tau(i/j)} := \max \{\tau(i), \tau(j)\}$. 
For two mappings $g,h : \N^l \to \R$, we use the notation \struc{$g(\Vector{v}) \in \bigo{(h(\Vector{v}))}$} to state the existence of $i \in \N$ such that $g(\Vector{v}) \leq i h(\Vector{v})$, for all $\Vector{v} \in \N^l$.
%
%\subsection{Representing Sparse Polynomials}
%
Throughout the paper, we use bold letters for vectors (small) and matrices (capital), e.g., $\struc{\Vector{x}}=(x_1,\ldots,x_n) \in \R^n$.
For a given vector $\Vector{x}$, we denote the $j$-th coordinate of $\Vector{x}$ by $\struc{x_j} \in \R$, and $\struc{\Vector{x}_{\setminus j}} \in \R^{n-1}$ as the vector obtained by $\Vector{x}$ after removing $x_j$.
Furthermore, let \struc{$\R[\Vector{x}] = \R[x_1,\ldots,x_n]$} be the ring of real $n$-variate polynomials. We define $\struc{\Q[\Vector{x}]}$ similarly.
We denote the set of all $n$-variate polynomials of degree less than or equal to $2d$ by $\struc{\R[\Vector{x}]_{n,2d}}$. 

For a polynomial $p \in \R[\Vector{x}]$, we denote its \struc{\emph{support}} by $\struc{\support{p}} \subset \N^n$; we just write $A$ if it is unambiguous and use the convention $\struc{t} = \# A$.
%We mostly regard \struc{\emph{sparse polynomials}} $p \in \R[\Vector{x}]$ supported on a finite set $\struc{A} \subset \N^n$; we write $\struc{\support{p}}$ if a clarification is necessary.
Thus, $p$ is of the form $\struc{p(\Vector{x})} = \sum_{\Vector{\alpha} \in A}^{} b_{\Vector{\alpha}}\Vector{x}^{\Vector{\alpha}}$ with $\struc{b_{\Vector{\alpha}}} \in \R\setminus\{0\}$ and $\struc{\Vector{x}^{\Vector{\alpha}}} = x_1^{\alpha_1} \cdots x_n^{\alpha_n}$. 
%Unless stated differently, we follow the convention .
We say, a polynomial is \struc{\emph{sparse}}, if $t \ll \binom{n+2d}{2d} = \dim\left(\R[\Vector{x}]_{n,2d}\right)$.
The support of $p$ can be expressed as an $n \times t$ matrix, which we denote by $\struc{\Matrix{A}}$, such that the $j$-th column of $\Matrix{A}$ is $\Vector{\alpha(j)}$.
Hence, $p$ is uniquely described by the pair $(\Matrix{A},\Vector{b})$, written $p = \struc{\polynomial{\Matrix{A}}{\Vector{b}}}$.
If $\Vector{0}\in A$, then $p(\Vector{0})$ is called the \struc{\emph{constant term}}.

Let us denote by $\struc{\New(p)} := \conv\left(\{\Vector{\alpha} \in \N^n : b_{\Vector{\alpha}} \neq 0\}\right)$  the \textit{\struc{Newton polytope}} of $p$ and $\struc{\vertices{p}} := \left\{\Vector\alpha \in \support{p} : \Vector{\alpha} \text{ is vertex of } \newton{p}\right\}$ be its vertices.
%A lattice point is called \struc{\textit{even}} if it is in $(2\N)^n$ and a term $b_{\Vector{\alpha}}\Vector{x}^{\Vector{\alpha}}$ is called a \struc{\emph{monomial square}} if $b_{\Vector{\alpha}} > 0$ and $\Vector{\alpha}$ even.
We define the exponents of monomial squares in the support of $p$ as $\struc{\monoSquares{p}} := \left\{\Vector{\alpha} \in \support{p} : \Vector{\alpha} \in (2\N)^n, b_{\Vector{\alpha}} > 0\right\}$.
Its complement in the support is $\struc{\nonSquares{p}} = \support{p} \setminus \monoSquares{p}$.
%Moreover, we use the notation $\struc{\nonSquares{p}} = \support{p} \setminus \monoSquares{p}$ for all elements of the support of $p$, which are not monomial squares. 
We indicate the elements of the support which are in the interior of $\New(p)$ by $\struc{\interior{p}} = \support{p} \setminus \partial \New(p)$.

\subsection{SONC Polynomials}
\label{section:background_sonc}
\if{
Geometric programming was introduced in \cite{Duffin:Peterson:Zener:Book}. 
It is equivalent to a convex optimization problem. Applications include problems in circuit design problems, nonlinear network flow, and optimal control; for an overview see \cite{Boyd:Kim:Vandenberghe:TutorialOnGP, Boyd:Vandenberghe:ConvexOptimization}

\begin{definition}
	A function $p : \R_{>0}^n\to \R$ of the form $\struc{p(\Vector{z})} = p(z_1,\ldots,z_n) = cz_1^{\alpha_1}\cdots z_n^{\alpha_n}$ with $c > 0$ and $\alpha_i \in \R$ is called a \struc{\emph{monomial (function)}}. A sum $\struc{\sum_{i=0}^k c_iz_1^{\alpha_{1}(i)}\cdots z_n^{\alpha_{n}(i)}}$ of monomials with $c_i > 0$ is called a \struc{\emph{posynomial (function)}}.

	A \struc{\emph{geometric program} (GP)} has the following form:
	\begin{mini}
		{\Vector{z}\in\R_{>0}^n}
		{p_0(\Vector{z})}
		{\label{prob:mine}}
		{}
		\myCons{p_i(\Vector{z})}{\leq 1}{i=1,\ldots,N}
		\myCons{q_j(\Vector{z})}{= 1}{j=1,\ldots,M}
	\end{mini}
	where $p_0,\dots,p_m$ are posynomials and $q_1,\dots,q_l$ are monomial functions.
	\label{definition:GP}
\end{definition}

The convex equivalents of geometric programs can be solved with interior point methods. 
In \cite{Nesterov:Nemirovskii}, the authors prove worst-case polynomial time complexity of this method; see also \cite[Page 118]{Boyd:Kim:Vandenberghe:TutorialOnGP}.
}\fi
%\subsection{SONC Polynomials}
%
We now introduce the fundamental properties of SONC polynomials, as far as they are required in this article. 
SONC are composed of \textit{circuit polynomials}, which were first introduced in \cite{Iliman:deWolff:Circuits}:

\begin{definition}
	\label{definition:CircuitPolynomial}
	A \struc{\emph{circuit polynomial}} $p = \poly(\Matrix{A},\Vector{b}) \in \R[\Vector{x}]$ is of the form $p(\Vector{x}) = \sum_{j=1}^r b_{\Vector{\alpha}(j)} \Vector{x}^{\Vector{\alpha}(j)} + b_{\Vector{\beta}} \Vector{x}^{\Vector{\beta}}$, 
\if{	
	\begin{eqnarray}
		\struc{p(\Vector{x})} & = & \sum_{j=1}^r b_{\Vector{\alpha}(j)} \Vector{x}^{\Vector{\alpha}(j)} + b_{\Vector{\beta}} \Vector{x}^{\Vector{\beta}}, \label{Equ:CircuitPolynomial}
	\end{eqnarray}
}\fi	
	with $0 \leq \struc{r} < n$, coefficients $\struc{b_{\Vector{\alpha}(j)}} \in \R_{> 0}$, $\struc{b_{\Vector{\beta}}} \in \R$, exponents $\struc{\Vector{\alpha}(j)} \in (2\Z)^n$, $\struc{\Vector{\beta}} \in \Z^n$, such that the following condition holds:
	there exist unique, positive \struc{\emph{barycentric coordinates} $\lambda_j$} relative to the $\Vector{\alpha}(j)$ with $j=1,\ldots,r$ satisfying
	\begin{align}
			\label{equ:BarycentricCoordinates}
			& & \Vector{\beta} \ = \ \sum_{j=1}^r \lambda_j \Vector{\alpha}(j) \ \text{ with } \ \lambda_j \ > \ 0 \ \text{ and } \ \sum_{j=1}^r \lambda_j \ = \ 1.
	\end{align}
	For every circuit polynomial $p$ we define the corresponding \struc{\textit{circuit number}} as 
	\begin{align}
		%\label{equ:DefCircuitNumber}
		\struc{\Theta_p} \ = \ \prod_{j = 1}^r \left(\frac{b_{\Vector{\alpha}(j)}}{\lambda_j}\right)^{\lambda_j}. 
		%\tag*{\qedhere}
	\end{align}
\end{definition}
	By Condition \eqref{equ:BarycentricCoordinates}, $\Matrix{A}(p)$ forms a minimal affine dependent set, which is called a \struc{\textit{circuit}}, see e.g., \cite{Oxley:MatroidTheory}.
	More specifically, Condition \eqref{equ:BarycentricCoordinates} implies that $\New(p)$ is a simplex with even vertices $\Vector{\alpha}(1),\ldots,\Vector{\alpha}(r)$ and that the exponent $\Vector{\beta}$ is in the strict interior of $\New(p)$ if $\dim(\New(p)) \geq 1$. 
	Therefore, we call $p_{\Vector{\beta}} \Vector{x}^{\Vector{\beta}}$ the \struc{\emph{inner term}} of $p$.

%Circuit polynomials are proper building blocks for nonnegativity certificates since the circuit number alone determines whether they are nonnegative.
Since the circuit number alone determines whether they are nonnegative, circuit polynomials are proper building blocks for nonnegativity certificates

\begin{theorem}[\cite{Iliman:deWolff:Circuits}, Theorem 3.8]
	Let $p$ be a circuit polynomial as in~\cref{definition:CircuitPolynomial}. Then $p$ is nonnegative if and only if:
	\begin{enumerate}
		\item $p$ is a sum of monomial squares, or
		\item the coefficient $b_{\Vector{\beta}}$ of the inner term of $p$ satisfies $|b_{\Vector{\beta}}| \leq \Theta_p$.
	\end{enumerate}
	\label{theorem:CircuitPolynomialNonnegativity}
\end{theorem}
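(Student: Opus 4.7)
The strategy is to apply the weighted arithmetic--geometric mean (AM/GM) inequality to the monomial squares $b_{\Vector{\alpha}(j)}\Vector{x}^{\Vector{\alpha}(j)}$, using the barycentric coefficients $\lambda_j$ from \eqref{equ:BarycentricCoordinates} as weights. Since each $\Vector{\alpha}(j) \in (2\Z)^n$, the value $\Vector{x}^{\Vector{\alpha}(j)} = |\Vector{x}|^{\Vector{\alpha}(j)}$ is nonnegative for every $\Vector{x} \in \R^n$, so setting $a_j := b_{\Vector{\alpha}(j)}\Vector{x}^{\Vector{\alpha}(j)}/\lambda_j \geq 0$ and combining weighted AM/GM with $\sum_j \lambda_j \Vector{\alpha}(j) = \Vector{\beta}$ yields the central inequality
\[
    \sum_{j=1}^{r} b_{\Vector{\alpha}(j)}\Vector{x}^{\Vector{\alpha}(j)} \ \geq \ \prod_{j=1}^{r} a_j^{\lambda_j} \ = \ \Theta_p \cdot |\Vector{x}^{\Vector{\beta}}|,
\]
valid on all of $\R^n$. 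This single inequality drives both implications.

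For the ``if'' direction, alternative (1) is trivial, and under alternative (2) I bound $b_{\Vector{\beta}}\Vector{x}^{\Vector{\beta}} \geq -|b_{\Vector{\beta}}|\cdot|\Vector{x}^{\Vector{\beta}}| \geq -\Theta_p |\Vector{x}^{\Vector{\beta}}|$ and add it to the AM/GM lower bound on the monomial-square part. For the ``only if'' direction, I assume $p \geq 0$ and that $p$ is not a sum of monomial squares, so either $b_{\Vector{\beta}} < 0$ or $\Vector{\beta} \notin (2\N)^n$, and aim to produce a witness $\Vector{x}^\ast \in (\R \setminus \{0\})^n$ at which the AM/GM inequality above is tight and the inner term is strictly negative; at such a point $p(\Vector{x}^\ast) = (\Theta_p - |b_{\Vector{\beta}}|)|\Vector{x}^{\ast\,\Vector{\beta}}|$, and nonnegativity of $p$ together with $|\Vector{x}^{\ast\,\Vector{\beta}}| > 0$ forces $|b_{\Vector{\beta}}| \leq \Theta_p$.

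The main obstacle, and essentially the only nontrivial step, is the explicit construction of this witness. AM/GM is tight when the $a_j$ all agree, which after taking logarithms is an affine-linear condition on $(\log|x_1^\ast|,\dots,\log|x_n^\ast|)$ governed by the differences $\Vector{\alpha}(j)-\Vector{\alpha}(1)$. These differences are linearly independent by virtue of the affine independence of $\{\Vector{\alpha}(j)\}_{j=1}^{r}$ guaranteed by \eqref{equ:BarycentricCoordinates}, so the system is consistent and admits a strictly positive solution for $|\Vector{x}^\ast|$. The correct sign of $b_{\Vector{\beta}}(\Vector{x}^\ast)^{\Vector{\beta}}$ is then arranged independently by flipping signs of those coordinates $x_i^\ast$ with $\beta_i$ odd (or arbitrarily when $b_{\Vector{\beta}} < 0$), since sign flips preserve all monomial squares and hence the AM/GM equality.

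A conceptually cleaner route, avoiding ad hoc sign bookkeeping, is the substitution $x_i = \sigma_i e^{y_i/2}$ with $\sigma_i \in \{\pm 1\}$, which turns $p$ into a signomial in $\Vector{y} \in \R^n$; AM/GM then applies directly in exponential form and the minimizer is produced by a standard convex-optimization argument, essentially the viewpoint underlying the SAGE framework exploited throughout this paper.
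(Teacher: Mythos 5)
This theorem is imported from \cite{Iliman:deWolff:Circuits} (Theorem 3.8) and the paper gives no proof of its own; your argument is correct and is essentially the standard one from that reference: weighted AM/GM with the barycentric coordinates $\lambda_j$ as weights gives $\sum_j b_{\Vector{\alpha}(j)}\Vector{x}^{\Vector{\alpha}(j)} \geq \Theta_p\,\lvert\Vector{x}^{\Vector{\beta}}\rvert$, and the converse follows from solving the (consistent, by affine independence of the $\Vector{\alpha}(j)$) linear system in $\log\lvert x_i\rvert$ that makes AM/GM tight, then flipping signs on coordinates with $\beta_i$ odd. No gaps worth noting beyond the degenerate case $r=0$ (a single monomial square), which is trivial.
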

The set of \struc{\emph{sums of nonnegative circuit polynomials}} (SONC) is a convex cone. 
For further details about SONC see \cite{deWolff:Circuits:OWR,Iliman:deWolff:Circuits, Dressler:Iliman:deWolff:Positivstellensatz}.

\if{
\begin{definition}
	\label{definition:SONC}
	We define for every $n,d \in \N$ the set of \struc{\emph{sums of nonnegative circuit polynomials} (SONC)} in $n$ variables of degree $2d$ as
	\begin{align*}
		\struc{C_{n,2d}} \ &= \ \left\{f \in \R[\Vector{x}]_{n,2d} \ :\ f = \sum_{\rm finite} p_i,\right.\\&\qquad \left. p_i \text{ is a nonnegative circuit polynomial} \right\}.
		\tag*{\qedhere}
	\end{align*}
\end{definition}
}\fi

Let us consider $p= \poly(\Matrix{A},\Vector{b})  = \sum_{\Vector{\alpha} \in A} b_{\Vector{\alpha}}\x^{\Vector{\alpha}}$. 
To algorithmically determine a lower bound of $p$ via SONC, we take an approach, similar to the one described in \cite[\S~3]{seidler2018experimental}.
As initial relaxation, each monomial non-square is equipped with a negative sign.
This allows us, to restrict ourselves to the positive orthant; see e.g., \cite[Section 3.1]{Iliman:deWolff:Circuits} for further details.
Next, we compute a \struc{\emph{covering}} $\CoverSet$, which is a sequence of sets $\left(\CoverSet^{\Vector{\beta}}\right)_{\Vector{\beta}\in\nonSquares{p}} \subseteq A$ such that $\nonSquares{p}\subseteq \bigcup_{\Vector{\beta}\in\nonSquares{p}} \CoverSet^{\Vector{\beta}}$ and each $\CoverSet^{\Vector{\beta}}$ is the support of a nonnegative circuit polynomial $p_{\Vector{\beta}}$ with interior point $\Vector{\beta}$.
To obtain a covering, we write each non-square as a minimal convex combination of monomial squares, by solving the following LP for each $\Vector{\beta}\in\nonSquares{p}$, as explained in~\cite[Algorithm~3.3]{seidler2018experimental}.
\begin{align}
	\begin{aligned}
		\label{LP:extremal}
		\sum_{\Vector{\alpha}\in\monoSquares{p}} \lambda_{\Vector{\alpha}}^{\Vector{\beta}} \cdot \Vector{\alpha} &= \Vector{\beta}\\
		\sum_{\Vector{\alpha}\in\monoSquares{p}} \lambda_{\Vector{\alpha}}^{\Vector{\beta}} &= 1\\
		\lambda_{\Vector{\alpha}}^{\Vector{\beta}} &\geq 0 \qquad\text{for all } \Vector{\alpha}\in\monoSquares{p}
	\end{aligned}
	\tag{LP}
\end{align}
If $\left\{\Vector{\alpha}: \lambda_{\Vector{\alpha}} > 0\right\}$ is not minimal, then we can reduce it by applying the following lemma.
The lemma is folklore, for a constructive proof see e.g.~\cite[Lemma~3.1]{seidler2018experimental}.
\begin{lemma}
	\label{lem:cover_reduction}
	For every non-extremal point $\Vector{v}\in \support{p}\setminus\vertices{p}$, we can efficiently compute affinely independent $\Vector{v}_0,...,\Vector{v}_m \in \vertices{p}$ with $m\leq n$ such that and $\Vector{v}\in\conv\left(\{\Vector{v}_0,...,\Vector{v}_m\}\right)$.
\end{lemma}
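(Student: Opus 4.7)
The plan is to realize this as a constructive version of Carath\'eodory's theorem applied to the Newton polytope $\New(p)$. Since $\Vector{v} \in \support{p} \setminus \vertices{p}$ lies in $\New(p) = \conv(\vertices{p})$, we know that $\Vector{v}$ admits some representation $\Vector{v} = \sum_{i=0}^{k} \mu_i \Vector{v}_i$ with $\mu_i > 0$, $\sum_i \mu_i = 1$, and $\Vector{v}_i \in \vertices{p}$. Such an initial representation can be produced by solving an LP analogous to \eqref{LP:extremal}, using $\vertices{p}$ (rather than $\monoSquares{p}$) as the admissible set of atoms. If the resulting $\Vector{v}_0,\ldots,\Vector{v}_k$ happen to be affinely independent, the lemma follows immediately, since affinely independent points in $\R^n$ must satisfy $k \leq n$.

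Otherwise I would proceed by a standard reduction loop. Suppose $\Vector{v}_0,\ldots,\Vector{v}_k$ are affinely dependent: then there exist scalars $c_0,\ldots,c_k$, not all zero, with $\sum_i c_i = 0$ and $\sum_i c_i \Vector{v}_i = \Vector{0}$. Such a vector $\Vector{c}$ is obtained by computing a nonzero element in the kernel of the $(n+1)\times(k+1)$ matrix whose columns are $(\Vector{v}_i, 1)$; this is a single linear-algebra step. For any $t \in \R$ we still have $\Vector{v} = \sum_i (\mu_i - t c_i) \Vector{v}_i$ and $\sum_i (\mu_i - t c_i) = 1$. Choosing $t := \min\{\mu_i / c_i : c_i > 0\}$ (which exists because some $c_i$ must be strictly positive, the $c_i$ summing to $0$ and not all vanishing) yields new nonnegative coefficients $\mu_i' = \mu_i - t c_i \geq 0$, at least one of which equals $0$. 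Dropping the corresponding atoms gives a shorter convex representation of $\Vector{v}$ using a strict subset of $\{\Vector{v}_0,\ldots,\Vector{v}_k\} \subseteq \vertices{p}$.

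Iterating this reduction strictly decreases $k$, so after at most $|\vertices{p}|$ iterations the remaining atoms $\Vector{v}_0,\ldots,\Vector{v}_m$ are affinely independent; hence $m \leq n$ automatically, and we obtain the desired decomposition with $\Vector{v} \in \conv(\{\Vector{v}_0,\ldots,\Vector{v}_m\})$. For the efficiency claim, each iteration consists of solving one small linear system (finding a kernel vector of a matrix of size $\bigo{(n \cdot t)}$) and a single scalar minimization over at most $t$ ratios, both of which are polynomial in $n$ and $t$; the number of iterations is also bounded by $t$.

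The only real subtlety, and the step I expect to need the most care, is ensuring that the iterative procedure truly stays inside $\vertices{p}$ (it does, since we only ever remove atoms, never introduce new ones) and that the chosen $t$ produces nonnegative coefficients while strictly killing at least one atom. Since $\sum_i c_i = 0$ with the $c_i$ not all zero, both positive and negative $c_i$ occur, so the minimum defining $t$ is a positive finite number, and the corresponding $\mu_i'$ vanishes exactly as required. This guarantees termination after finitely many, indeed polynomially many, steps.
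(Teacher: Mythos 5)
Your argument is correct, and it is precisely the ``folklore'' constructive Carath\'eodory reduction that the paper itself does not spell out but delegates to its citation of \cite[Lemma~3.1]{seidler2018experimental}: start from any convex representation of $\Vector{v}$ over $\vertices{p}$, and repeatedly use an affine dependence $\sum_i c_i = 0$, $\sum_i c_i \Vector{v}_i = \Vector{0}$ to zero out a coefficient while preserving nonnegativity. One remark: since the feasible region of your initial LP is already cut out by the $n+1$ affine equations $\sum_i \mu_i \Vector{v}_i = \Vector{v}$ and $\sum_i \mu_i = 1$, any \emph{basic} feasible solution has at most $n+1$ nonzero coordinates whose supporting atoms $(\Vector{v}_i,1)$ are linearly independent columns, i.e.\ the $\Vector{v}_i$ are affinely independent; so requesting a vertex of the LP polytope replaces your entire reduction loop in one step.
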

So for each $\Vector{\beta}$, we obtain a vector $\Vector{\lambda}^{\Vector{\beta}}$ of barycentric coordinates, relative to the simplex $\CoverSet^{\Vector{\beta}} = \left\{\Vector{\alpha} \in \monoSquares{p} : \lambda_{\Vector{\alpha}}^{\Vector{\beta}} > 0\right\}$.
%$\monoSquares{p_{\Vector{\beta}}} = \vertices{p_{\Vector{\beta}}}$.
We denote these computations by $\struc{\left(\Matrix{\lambda}, \CoverSet\right) := \coverfun{p}}$.

Then, we solve the following geometric program (GP): 
\begin{align}
		\label{problem:SONC}
		\begin{aligned}
			\GPopt = \min_{\Matrix{X}} & \quad \sum_{\Vector{\beta}\in\nonSquares{p}} X_{\Vector{\beta}, \Vector{0}} \\
			\text{s.t.} & \quad \sum_{\Vector{\beta}\in\nonSquares{p}} X_{\Vector{\beta},\Vector{\alpha}} \leq b_{\Vector{\alpha}} \,, \quad \Vector{\alpha} \in \monoSquares{p}, \Vector{\alpha} \neq \Vector{0} \,, \\
			& \quad \prod_{\Vector{\alpha}\in\CoverSet^{\Vector{\beta}}} \left(\frac{X_{\Vector{\beta},\Vector{\alpha}}}{\lambda^{\Vector{\beta}}_{\Vector{\alpha}}}\right)^{\lambda^{\Vector{\beta}}_{\Vector{\alpha}}} = -b_{\Vector{\beta}} \,, \quad  \Vector{\beta} \in \nonSquares{p} \,,\\
			& \quad X_{\Vector{\beta}, \Vector{\alpha}} \geq 0 \,, \quad  \Vector{\alpha} \in \monoSquares{p}, \Vector{\beta} \in \nonSquares{p}\,.
		\end{aligned}
		\tag{SONC}
	\end{align}
For an overview about GPs, see \cite{Boyd:Kim:Vandenberghe:TutorialOnGP, Boyd:Vandenberghe:ConvexOptimization}.
If $\GPopt$ is attained at $\Matrix{\GPsol}$, then one has $p_{\Vector{\beta}} = \sum_{\Vector{\alpha} \in \CoverSet^{\Vector{\beta}}} \GPsol_{\Vector{\beta}, \Vector{\alpha}} \cdot \x^{\Vector{\alpha}} + b_{\beta} \x^{\Vector{\beta}} \geq 0$ by \cref{theorem:CircuitPolynomialNonnegativity}, and $p + \GPopt - b_{\Vector{0}} = \sum_{\Vector{\beta} \in \nonSquares{p}} p_{\Vector{\beta}} \geq 0$. Hence, $b_{\Vector{0}} - \GPopt$ is a lower bound of $p$ on $\R^n$.

These computations correspond to \cite[\S~3.3.2]{seidler2018experimental}, where each cover contains just a single non-square.
The approach dispayed here simplifies the computations, while keeping the quality of the results.

\subsection{SAGE Polynomials}
\label{section:background_sage}
Let $\struc{e} := \exp \, (1)$.
The \struc{\emph{relative entropy}} function is defined for $\nub, \Vector{c} \in \R_+^t$ by $\struc{D(\nub,\Vector{c})} := \sum_{j=1}^t \nu_j \log \frac{\nu_j}{c_j}$. 
A \struc{\emph{signomial}} $p$ is a weighted sum of exponentials composed with linear functionals of a variable $\x \in \R^n$: given $t \in \N$, $c_1,\dots,c_t \in \Q$ and $\ab(1),\dots,\ab(t) \in \N^n$, we write $p(\x) = \sum_{j=1}^t c_j \exp \, ( \ab(j) \cdot \x )$.
%
%% ARXIV
\if{
\begin{align*}
	p(\x) = \sum_{j=1}^t c_j \exp \, ( \ab(j) \cdot \x ) \,,
\end{align*}
}\fi
Note that for general signomials, one considers $c_1,\dots,c_t \in \R$ and $\ab(1),\dots,\ab(t) \in \R^n$. However, for certification purposes, we restrict the coefficients to the set of rationals and the exponents to tuples of nonnegative integers.
A globally nonnegative signomial with at most one negative coefficient is called an \struc{\emph{AM/GM exponential}} or \struc{\emph{arithmetic-geometric-mean-exponential (AGE)}}.
Certifying the nonnegativity of an AGE is done by verifying an arithmetic-geometric-mean inequality. 
This is recalled in the following result, stated in~\cite[Lemma~2.2]{Chandrasekaran:Shah:SAGE}.
\begin{lemma}
	\label{lemma:age}
	Let $p(\x) = \sum_{j=1}^t c_j \exp \, (\ab(j) \cdot \x ) + \beta \exp \, (\ab(0) \cdot \x )$, with $c_1,\dots,c_t\in \Q_{>0}$, $\beta \in \Q$ and $\ab(0),\ab(1),\dots,\ab(t) \in \N^n$. 
	Then $p(\x) \geq 0$ for all $\x \in \R^n$ if and only if there exists $\nub \in \R_+^t$ such that $D(\nub,e \Vector{c}) \leq \beta$ and $\sum_{j=1}^t \ab(j) \nu_j = (\oneb \cdot \nub) \, \ab(0)$.
\end{lemma}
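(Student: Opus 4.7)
The plan is to prove the lemma as an application of convex Lagrangian duality. Since $\exp(\ab(0) \cdot \x) > 0$ for every $\x$, nonnegativity of $p$ on $\R^n$ is equivalent to
\[
  \inf_{\x \in \R^n} g(\x) \;\geq\; -\beta, \qquad \text{where } g(\x) \;:=\; \sum_{j=1}^t c_j \exp\bigl((\ab(j) - \ab(0)) \cdot \x\bigr).
\]
The key step is to identify the Lagrangian dual of the (unconstrained) convex program $\min_\x g(\x)$ with the optimization $\sup\{-D(\nub, e\Vector{c}) : \nub \in \R_+^t,\ \sum_j \ab(j)\nu_j = (\oneb \cdot \nub)\ab(0)\}$ from the lemma.

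To carry this out I would rewrite the primal in epigraph form: minimize $\sum_j t_j$ subject to $\log(t_j / c_j) \geq (\ab(j) - \ab(0)) \cdot \x$ for each $j$, introduce multipliers $\nu_j \geq 0$, and form the Lagrangian
\[
  L(\x, \Vector{t}, \nub) \;=\; \sum_j t_j \;+\; \sum_j \nu_j\bigl((\ab(j)-\ab(0))\cdot\x - \log(t_j/c_j)\bigr).
\]
Minimizing over $\x$ forces the moment constraint $\sum_j \nu_j(\ab(j) - \ab(0)) = \Vector{0}$ (otherwise the dual is $-\infty$), which is exactly $\sum_j \ab(j) \nu_j = (\oneb \cdot \nub) \ab(0)$. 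Minimizing over $t_j > 0$ gives $t_j^\star = \nu_j$, and back-substitution produces the dual objective $\oneb \cdot \nub - \sum_j \nu_j \log(\nu_j / c_j)$, which coincides with $-D(\nub, e\Vector{c})$ via the identity $D(\nub, e\Vector{c}) = \sum_j \nu_j \log(\nu_j / c_j) - \oneb \cdot \nub$.

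The $(\Leftarrow)$ direction then follows from weak duality: any feasible $\nub$ satisfying $D(\nub, e\Vector{c}) \leq \beta$ gives $\inf_\x g(\x) \geq -D(\nub, e\Vector{c}) \geq -\beta$, so $p \geq 0$. The $(\Rightarrow)$ direction follows from strong duality, which holds because $g$ is finite-valued on all of $\R^n$ and the epigraph constraints admit a strictly interior point such as $\x = \Vector{0}$, $t_j = 2 c_j$ (Slater); strong duality then yields a feasible $\nub^\star$ with $-D(\nub^\star, e\Vector{c}) = \inf_\x g(\x) \geq -\beta$, which is the desired certificate. For the boundary case $\nub = \Vector{0}$, one uses the conventions $0 \log 0 = 0$ and $D(\Vector{0}, e\Vector{c}) = 0$, which force $\beta \geq 0$ and render $p \geq 0$ immediate.

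The main obstacle is ensuring dual attainment on the boundary of the SAGE cone, where $\inf_\x g$ may fail to be attained. This is handled either by a closure argument (the feasible set of $\nub$ is closed and the dual objective is upper semicontinuous on it, so the supremum is attained whenever the primal value is finite) or, when $\inf g$ is attained at some $\x^\star$, by directly setting $\nu_j := c_j \exp((\ab(j) - \ab(0)) \cdot \x^\star)$, checking the moment condition via $\nabla g(\x^\star) = \Vector{0}$, and verifying the clean identity $D(\nub, e\Vector{c}) = -g(\x^\star)$ that makes the entropy bound tight.
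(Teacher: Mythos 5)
Your proof is correct, and it follows essentially the same route as the source the paper cites for this lemma (Chandrasekaran--Shah, Lemma~2.2): reduce to $\inf_\x g(\x) \geq -\beta$ after dividing by $\exp(\ab(0)\cdot\x)$, then identify the Lagrangian dual of the epigraph formulation with the relative-entropy program, using weak duality for sufficiency and Slater plus dual attainment (the primal value is finite since $g>0$) for necessity. The paper itself states the lemma without proof, so there is nothing further to compare.
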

Given $\ab(0),\ab(1),\dots,\ab(t) \in \N^n$, the set of AGE signomials is a convex cone, denoted by $\Cage$, and defined as follows:
{\small
\begin{align*}
	\struc{\Cage} := \left\{(\Vector{c},\beta) \in \R_+^t \times \R : \text{There exists } \nub \in \R_+^t \text{ with } D(\nub,e \Vector{c}) \leq \beta \,, \sum_{j=1}^t \ab(j) \nu_j = (\oneb \cdot \nub) \, \ab(0) \right\} \,.
\end{align*}
}
The set of \struc{\emph{sums of AGE (SAGE)}} polynomials is also a convex cone, denoted by \struc{$\Csage$}.
By~\cite[Proposition~2.4]{Chandrasekaran:Shah:SAGE}, one has the following characterisation.
\begin{theorem}
	\label{theorem:SAGE_characterisation}
	A signomial $f=\sum_{i=1}^t b_j \exp \, (\ab(j) \cdot \x)$ lies in $\Csage$ if and only if there is $\Vector{c}^{(1)},\dots, \Vector{c}^{(t)}$, $\nub^{(1)},\dots,\nub^{(t)} \in \R^t$ satisfying the following conditions:
	\begin{align}
		\label{problem:SAGE-feas}
		\begin{aligned}
			\sum_{j=1}^t \Vector{c}^{(j)} = \Vector{b} \,, \quad
			\sum_{i=1}^t \ab(i) \nub_i^{(j)} = \Vector{0} \,, \quad
			-\Vector{1} \cdot \nub_{\setminus j}^{(j)} = \nu_j^{(j)} \,, \\
			\Vector{c}_{\setminus j}^{(j)},	\nub_{\setminus j}^{(j)} \geq \Vector{0} \,, \quad
			D\left(\nub_{\setminus j}^{(j)}, e\Vector{c}_{\setminus j}^{(j)}\right) \leq c_j^{(j)} 
		 \,, \quad j=1,\dots,t \,.
		\end{aligned}
		\tag{SAGE-feas}
	\end{align}
\end{theorem}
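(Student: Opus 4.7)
The plan is a direct two-sided argument that converts between a SAGE decomposition of $f$ and a solution to the feasibility system. The key observation is that the feasibility conditions are simply the canonical ``column-by-column'' form of writing $f$ as a sum of exactly $t$ AGE signomials, one ``centered'' at each exponent $\ab(j)$, with Lemma~\ref{lemma:age} providing the nonnegativity certificate for each summand.

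For the forward direction, suppose $f \in \Csage$. By definition $f = \sum_k g_k$ for finitely many AGE signomials $g_k$, each supported on a subset of $\{\ab(1),\dots,\ab(t)\}$ and having at most one negative coefficient. Assign each $g_k$ an index $j(k) \in \{1,\dots,t\}$: if $g_k$ has a negative coefficient, take $j(k)$ to be its index, otherwise take $j(k)=1$ (say). For each $j$, set $f_j := \sum_{k: j(k) = j} g_k$ and write $f_j(\x) = \sum_{i=1}^t c_i^{(j)} \exp(\ab(i) \cdot \x)$. Then $f_j$ is an AGE whose only potentially negative coefficient is $c_j^{(j)}$, so in particular $\Vector{c}_{\setminus j}^{(j)} \geq \Vector{0}$, and $\sum_{j=1}^t \Vector{c}^{(j)} = \Vector{b}$ by construction. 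Apply Lemma~\ref{lemma:age} to each $f_j$ (with inner exponent $\ab(j)$) to obtain $\widetilde{\nub}^{(j)} \in \R_+^{t-1}$, indexed by $i \neq j$, satisfying $D(\widetilde{\nub}^{(j)}, e\Vector{c}_{\setminus j}^{(j)}) \leq c_j^{(j)}$ and $\sum_{i \neq j} \ab(i) \widetilde{\nu}_i^{(j)} = (\oneb \cdot \widetilde{\nub}^{(j)})\, \ab(j)$. Set $\nu_i^{(j)} := \widetilde{\nu}_i^{(j)}$ for $i \neq j$ and $\nu_j^{(j)} := -\oneb \cdot \widetilde{\nub}^{(j)}$. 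Substituting shows $\sum_{i=1}^t \ab(i)\, \nu_i^{(j)} = \Vector{0}$, and all the remaining constraints of \eqref{problem:SAGE-feas} are immediate.

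For the backward direction, given data $(\Vector{c}^{(j)}, \nub^{(j)})_{j=1}^t$ satisfying \eqref{problem:SAGE-feas}, define $f_j(\x) := \sum_{i=1}^t c_i^{(j)} \exp(\ab(i) \cdot \x)$. The constraint $\Vector{c}_{\setminus j}^{(j)} \geq \Vector{0}$ shows $f_j$ has at most one negative coefficient (at index $j$). Eliminate $\nu_j^{(j)}$ using $\nu_j^{(j)} = -\oneb \cdot \nub_{\setminus j}^{(j)}$: the moment equation $\sum_i \ab(i) \nu_i^{(j)} = \Vector{0}$ becomes $\sum_{i \neq j} \ab(i) \nu_i^{(j)} = (\oneb \cdot \nub_{\setminus j}^{(j)})\, \ab(j)$, which together with $\nub_{\setminus j}^{(j)} \geq \Vector{0}$ and $D(\nub_{\setminus j}^{(j)}, e\Vector{c}_{\setminus j}^{(j)}) \leq c_j^{(j)}$ are exactly the hypotheses of Lemma~\ref{lemma:age}. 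Hence every $f_j$ is AGE, and $\sum_{j=1}^t f_j = \sum_{i=1}^t \left(\sum_{j=1}^t c_i^{(j)}\right) \exp(\ab(i) \cdot \x) = \sum_{i=1}^t b_i \exp(\ab(i) \cdot \x) = f$, so $f \in \Csage$.

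The only real subtlety — and the step I would single out as most error-prone rather than mathematically hard — is the bookkeeping that relates the ``$t-1$ nonnegative multipliers'' appearing in Lemma~\ref{lemma:age} to the ``$t$-vector $\nub^{(j)}$ with free sign at the $j$-th coordinate'' appearing in \eqref{problem:SAGE-feas}. The identification $\nu_j^{(j)} = -\oneb \cdot \nub_{\setminus j}^{(j)}$ is precisely the bridge: it lets one rewrite the AM/GM moment identity $\sum_{i \neq j} \ab(i) \nu_i = (\oneb \cdot \nub) \ab(j)$ as the homogeneous identity $\sum_i \ab(i) \nu_i = \Vector{0}$, so that all $t$ AGE certificates can be packaged into a single square system of equalities indexed uniformly by $i,j \in \{1,\dots,t\}$.
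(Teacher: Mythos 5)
Your argument is correct, and it is essentially the standard proof of this statement; note that the paper itself gives no proof here but simply cites \cite[Proposition~2.4]{Chandrasekaran:Shah:SAGE}, so you have reconstructed the argument of the reference rather than matched an argument in the paper. Both directions work: grouping the AGE summands by the location of their (unique potential) negative coefficient, observing that each group sum is again a nonnegative signomial with at most one negative coefficient, and then translating the inhomogeneous moment identity of Lemma~\ref{lemma:age} into the homogeneous one via $\nu_j^{(j)} = -\oneb\cdot\nub_{\setminus j}^{(j)}$ is exactly the right bookkeeping, and the backward direction is its faithful reversal. Two small caveats you should be aware of. First, your phrase ``by definition \dots each supported on a subset of $\{\ab(1),\dots,\ab(t)\}$'' is only automatic under the reading (consistent with this paper's definition of $\Cage$ relative to a fixed exponent set) that $\Csage$ is the sum of the cones $C_{\mathrm{AGE}}$ over choices of the distinguished index within that fixed set; if one instead allowed AGE summands with arbitrary supports, the forward direction would require the nontrivial ``cancellation-free'' result of Murray, Chandrasekaran and Wierman, which is not what you are proving. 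Second, Lemma~\ref{lemma:age} as stated requires strictly positive $c_i$, so when some $c_i^{(j)}=0$ you must either drop that term before invoking the lemma and set the corresponding $\nu_i^{(j)}=0$, or adopt the usual convention $0\log(0/0)=0$ for the relative entropy; this is routine but worth saying.
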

One way to obtain lower bounds of a signomial $f$ is to solve the following REP:
\begin{align}
		\label{problem:SAGE}		
			f_{\text{SAGE}} = \sup \{ C \in \R: f - C \in \Csage   \} \,.
		\tag{SAGE}
	\end{align}
The constraints of~\cref{problem:SAGE} correspond to~\cref{problem:SAGE-feas}, after replacing $\Vector{b}$ by the vector of coefficients of $f - \lambda$.
\section{Exact Optimization via SONC/SAGE}
%\section{Symbolic post-processing of numerical certificates}
\label{section:post_processing}

In this section, we present two algorithms for converting a numerical solution for SONC and SAGE into a lower bound in exact arithmetic.
For a polynomial $p=\sum_{\Vector{\alpha} \in A} b_{\Vector{\alpha}} \Vector{x}^{\Vector{\alpha}}$, we assume $\Vector{0} \in A$, so there exists a constant term $p(\Vector{0}) \neq 0$.
Furthermore, we require that every non-square monomial lies in the interior of $\New{(p)}$ or on a face of $\New{(p)}$ including the origin. 
We denote numerical solutions $\struc{\tilde{\Matrix{X}}}, \struc{\tilde{\Matrix{c}}}, \struc{\tilde{\Matrix{\nu}}}$ with a tilde, intermediate rational solutions $\struc{\hat{\Matrix{X}}}, \struc{\hat{\Matrix{c}}}$ with a hat, and our final rational solution with regular letters.

\subsection{Symbolic Post-Processing for SONC}
\label{subsection:symbolic_post_SONC}

We focus on certifying exactly lower bounds of a given polynomial via SONC decompositions. 
We rely on the numerical procedure from \cref{section:background_sonc}, which starts to compute a covering of this polynomial.
% % % We strengthen the last assumption further and assume that every circuit polynomial related to this covering contains the exponent $\Vector{0}$, which means that every circuit includes a non-zero constant term.
Under these assumptions, we design an algorithm, called $\optsonc$, to convert numerical lower bounds, corresponding to SONC decompositions obtained via GP, into exact lower bounds.
%For details, how to compute a lower bound numerically, see \cite{seidler2018experimental}.

\begin{algorithm}
	%The post processing algorithm \optsonc{}, to obtain a lower bound in exact arithmetic via SONC, based on a numerical solution proceeds as follows.
	$\optsonc$
	\label{algorithm:SONC_post}
	\begin{algorithmic}[1]
		\Require $p=\sum_{\Vector{\alpha} \in A} b_{\Vector{\alpha}} \Vector{x}^{\Vector{\alpha}} \in \Q[\Vector{x}]$,
			rounding precision $\hat{\delta} \in \Q_{>0}$,
			precision parameter $\tilde{\delta} \in \Q_{>0}$ for the GP solver.
		\Ensure Matrix $\Matrix{X}$ of rational numbers, coefficients of the decomposition,
			certified lower bound $C\in\Q$ of $p$ on $\R^n$.
		\State $(\Matrix{\lambda}, \CoverSet) \gets \coverfun{p}$ \label{line:cover}% exact covering of non-squares with circuit polynomials }
			%\Comment{as in numerical method}
		\State $\tilde{\Matrix{X}} \gets \gpfun{p, \tilde{\delta}, \Matrix{\lambda}, \CoverSet}$ \label{line:gp} 
			\Comment{Solve \cref{problem:SONC} with accuracy $\tilde{\delta}$}
		\State $\hat{\Matrix{X}} \gets \roundfun{\tilde{\Matrix{X}}}{\hat{\delta}}$ \label{line:roundgp} \Comment{rounding step}
		\For{$\Vector{\alpha} \in \monoSquares{p}$ and $\Vector{\beta} \in \nonSquares{p}$}
		\State $X_{\Vector{\beta},\Vector{\alpha}} \gets b_{\Vector{\alpha}} \cdot \hat{X}_{\Vector{\beta},\Vector{\alpha}} / \sum_{\Vector{\beta'} \in \nonSquares{p}} \hat{X}_{\Vector{\beta'},\Vector{\alpha}}$ \label{line:projsonc} \Comment{projection step}
		\EndFor
		\For{$\Vector{\beta} \in \nonSquares{p}$}
%			\State $X_{1,j} \gets \roundfunUp{\left(-b_{h+j} \cdot \prod_{\substack{i\in\CoverSet_j\\i>1}} \left(\frac{X_{i,j}}{\lambda_{i,j}}\right)^{\lambda_{i,j}}\right)^{\frac{1}{\lambda_{1,j}}}}{\hat{\delta}}$ \label{line:constant}
\State $\operatorname{coeff} = \lambda^{\Vector{\beta}}_{\Vector{0}} \cdot\left(-b_{\Vector{\beta}} \cdot \prod_{\Vector{\alpha} \in \CoverSet^{\Vector{\beta}}} \left(\frac{\lambda_{\Vector{\alpha}}^{\Vector{\beta}}}{X_{\Vector{\beta}, \Vector{\alpha}}}\right)^{\lambda_{\Vector{\alpha}}^{\Vector{\beta}}}\right)^{\frac{1}{\lambda_{\Vector{0}}^{\Vector{\beta}}}}$
			\State $X_{\Vector{\beta},\Vector{0}} \gets \roundfunUp{\operatorname{coeff}}{\hat{\delta}}$
			\label{line:constant}
			\Comment{adjust constant term}
		\EndFor
		\State $C \gets b_{\Vector{0}} - \sum_{\Vector{\beta} \in \nonSquares{p}} X_{\Vector{\beta}, \Vector{0}}$
		\State \Return $\Matrix{X}$, $C$
	\end{algorithmic}
\end{algorithm}

In \cref{line:gp}, the function $\gpfun{}$ calls a GP solver to compute a $\tilde{\delta}$-approximation $\tilde{\Matrix{X}}$ of~\cref{problem:SONC}.
This approximation is then rounded in~\cref{line:roundgp} to a rational point $\hat{\Matrix{X}}$ with a prescribed maximal relative error of $\hat{\delta}$.
The projection step from~\cref{line:projsonc} scales the entries of $\hat{\Matrix{X}}$, yielding $\sum_{\Vector{\beta}} X_{\Vector{\beta}, \Vector{\alpha}} = b_{\Vector{\alpha}}$, for all $\Vector{\beta} \in \monoSquares{p}$, to satisfy the first set of equality constraints of~\cref{problem:SONC}.
In \cref{line:constant}, we round the coefficient \emph{up}, with relative error $\hat{\delta}$, so that we have
\begin{align*}
	X_{\Vector{\beta}, \Vector{0}} \geq \lambda^{\Vector{\beta}}_{\Vector{0}} \cdot\left(-b_{\Vector{\beta}} \cdot \prod_{\Vector{\alpha} \in \CoverSet^{\Vector{\beta}}} \left(\frac{\lambda_{\Vector{\alpha}}^{\Vector{\beta}}}{X_{\Vector{\beta}, \Vector{\alpha}}}\right)^{\lambda_{\Vector{\alpha}}^{\Vector{\beta}}}\right)^{\frac{1}{\lambda_{\Vector{0}}^{\Vector{\beta}}}}.
\end{align*}

As in \cref{section:background_sonc}, each $p_{\Vector{\beta}} := \sum_{\Vector{\alpha} \in \CoverSet^{\Vector{\beta}}} X_{\Vector{\beta}, \Vector{\alpha}} \cdot \x^{\Vector{\alpha}} + b_{\beta} \x^{\Vector{\beta}}$ is a nonnegative circuit polynomial. Hence, $C$ is a lower bound for $p$.
%By~\cref{theorem:CircuitPolynomialNonnegativity}, each $p_{\Vector{\beta}} = \sum_{\Vector{\alpha} \in \CoverSet^{\Vector{\beta}}} X_{\Vector{\beta}, \Vector{\alpha}} \cdot \x^{\Vector{\alpha}} + b_{\beta} \x^{\Vector{\beta}} \geq 0$.
%
Our assumption that every circuit polynomial contains a constant term, is necessary to ensure that $\lambda_{\Vector{0}}^{\Vector{\beta}} \neq 0$ for all $\Vector{\beta} \in \nonSquares{p}$ in our computations above.

\subsection{Symbolic Post-Processing for SAGE}
\label{subsection:symbolic_post_SAGE}

%As recalled in \cref{section:background_sage}, the cone $\Csage$ of SAGE polynomials, provides another certificate of nonnegativity. 	
Similarly to \cref{algorithm:SONC_post}, our algorithm $\optsage$ takes a given polynomial as input, obtains a numerical lower bound related to a SAGE decomposition computed via REP, and applies a post-processing to find a certified lower bound .
\begin{algorithm}
	$\optsage$
	%The post processing algorithm \optsage{}, to obtain a lower bound in exact arithmetic via SAGE, based on a numerical solution proceeds as follows.
	\label{algorithm:SAGE_post}
	\begin{algorithmic}[1]
		\Require $g = \sum_{i=1}^t b_i \Vector{x}^{\Vector{\alpha(i)}} \in \Q[\Vector{x}]$,
			rounding precision $\hat{\delta} \in \Q_{>0}$, 
			precision parameter $\tilde{\delta} \in \Q_{>0}$ for the REP solver.
		\Ensure Matrices $\Matrix{c}, \Matrix{\nu}$ of rational numbers, coefficients of the decomposition,
			certified lower bound $C\in\Q$ of $g$ on $\R^n$.
		\State $f \gets g(\exp \x - \exp (-\x))$ \label{line:changevars}
		\State Build the $(n+1) \times t$ matrix $\Matrix{Q}$ with columns $(\ab(1),1), \dots, (\ab(t),1)$
		\State $\tilde{\Matrix{c}}, \tilde{\Matrix{\nu}} \gets \repfun{f}{\tilde{\delta}}$ \label{line:optrep}
			\Comment{Solve \cref{problem:SAGE} with accuracy $\tilde{\delta}$}
		\State $\hat{\Vector{c}} \gets \roundfun{\tilde{\Vector{c}}}{\hat{\delta}} \,, \quad \hat{\nub} \gets \roundfun{\tilde{\nub}}{\hat{\delta}}$ \label{line:roundoptsage} \Comment{rounding step}
		\For{$j\in \{1,\ldots,t\}$}
			\State LP $\gets \left\{ \Matrix{Q}\cdot \Vector{\nu}^{(j)} = \Vector{0}, \Vector{\nu}^{(j)}_{\setminus j} \geq \Vector{0}, \|\Vector{\nu}^{(j)}-\tilde{\Vector{\nu}}^{(j)}\|_{\infty}\leq \hat{\delta}, \nu_1^{(j)}\geq\hat{\delta}\right\}$ \label{line:lp}
			\State $\Vector{\nu^{(j)}} \gets$ some element from LP \label{line:projnu}
				\Comment{projection step}			
			\State $\Vector{c}_{\setminus j}^{(j)} \gets \hat{\Vector{c}}_{\setminus j}^{(j)} \quad , c^{(j)}_j \gets b_j - \oneb \cdot \Vector{c}_{\setminus j}^{(j)}$ \label{line:projc} 		
		%\For{$i,j \in \{1,\ldots,t\}$}
			%\State $c_i^{(j)} \gets b_i\cdot \hat{c}_i^{(j)} / \sum_{j=1}^{t} \hat{c}_i^{(j)}$ \Comment{scale, to match constraints}
		%\EndFor
		\EndFor
		\For{$j\in \{1,\ldots,t\}$}
			\State $\operatorname{power} \gets 1-\log\nu_1^{(j)} -\frac{1}{\nu_1^{(j)}} \left(c_j^{(j)} - \sum_{i>1,i\neq j} \nu_i^{(j)}\log\frac{\nu_i^{(j)}}{e c_i^{(j)}}\right)$
			\State $c_1^{(j)} \gets \roundfunUp{\exp\left(\operatorname{power}\right)}{\hat{\delta}}$ \label{line:constantsage}
			\Comment{adjust constant term}
		\EndFor
		\State $C\gets b_1 - \sum_{j=1}^t c_1^{(j)}$
		\State \Return $\Matrix{c}$, $\Matrix{\nu}$, $C$
	\end{algorithmic}
\end{algorithm}

Given a polynomial $g(\y) = \sum_{j=1}^t b_j \y^{\ab(j)}$, one could apply the change of variables $y_i := \exp x_i$ when $\y \in \R^n_{>0}$. 
Since this transformation is only valid on the nonnegative orthant, one workaround used in $\optsage$ is to define the signomial $f(\x) = g(\exp \x - \exp (-\x))$ from~\cref{line:changevars}, in a such a way that a lower bound of $f$ yields a lower bound of $g$.
The $\rep$ function in~\cref{line:optrep} calls an REP solver to compute a $\tilde{\delta}$-approximation $(\tilde{\nub}, \tilde{\Matrix{c}})$ of~\cref{problem:SAGE}.
This approximation is then rounded to a rational point $(\hat{\nub}, \hat{\Matrix{c}})$ with a prescribed maximal relative error of $\hat{\delta}$. 
The projection steps in~\cref{line:projnu} and~\cref{line:projc} ensure that $(\nub, \Vector{c})$ satisfies exactly the linear equality constraints of ~\cref{problem:SAGE}, i.e.,~$\Matrix{Q} \nub^{(j)} = \Vector{0}$ and $\sum_{j=1}^t \c^{(j)} = b$. 
The first projection step boils down to exactly solve an LP with the constraint that $\nu_1^{(j)} > 0$, for all $j=1,\dots,t$, to ensure that further computation in~\cref{line:constantsage} are well-defined. 
Note that this projection could be done while relying on the pseudo-inverse of $\Matrix{Q}$, but one obtains better practical results via this procedure.
To ensure that the relative entropy inequality constraints of~\cref{problem:SAGE} are satisfied, the last step of $\optsage$ aims at finding  $c_j^{(1)}$ such that $c_j^{(j)} \geq D\left(\nub_{\setminus j}^{(j)}, e\Vector{c}_{\setminus j}^{(j)}\right) = 
		 \sum_{i> 1, i\neq j} \nu_i^{(j)} \log\frac{\nu_i^{(j)}}{e c_i^{(j)}} + \nu_1^{(j)} \log\frac{\nu_1^{(j)}}{e c_1^{(j)}}$. 
Thus, one relies on the $\texttt{round-up}$ procedure in~\cref{line:constantsage} to compute $c_1^{(j)} \geq \exp\left(1-\log\nu_1^{(j)} -\frac{1}{\nu_1^{(j)}} \left(c_j^{(j)} - \sum_{i>1,i\neq j} \nu_i^{(j)}\log\frac{\nu_i^{(j)}}{e c_i^{(j)}}\right)\right)$.
Eventually, one has $\sum_{j=1}^t c_i^{(j)}=b_i$, for all $i > 1$ and $\sum_{j=1}^t c_1^{(j)} = b_1 - C$, which certifies that $f - C \geq 0$ on $\R^n$.
We refer to~\cref{appendix:example} for an example of exact SAGE decomposition obtained with $\optsage$.
\section{Deciding Nonnegativity via SAGE}
\label{section:exact_SAGE}

We denote by \struc{$\intCsage$} the interior of the cone $\Csage$ of SAGE signomials. 
A signomial $f=\sum_{j=1}^t b_j \exp \, (\ab(j) \cdot \x)$ lies in $\intCsage$ if and only there is $\Vector{c}^{(1)},\dots, \Vector{c}^{(t)}$, $\nub^{(1)},\dots,\nub^{(t)} \in \R^t$ such that
\begin{align*}
	\label{problem:intSAGE}		
	\begin{aligned}
		\sum_{j=1}^t \Vector{c}^{(j)} = \Vector{b} \,, \quad 
		\sum_{i=1}^t \ab(i) \nub_i^{(j)} = \Vector{0} \,, \quad -\Vector{1} \cdot \nub_{\setminus j}^{(j)} = \nu_j^{(j)} \,, \\
		\Vector{c}_{\setminus j}^{(j)},	\nub_{\setminus j}^{(j)} > \Vector{0} \,, \quad 
		D\left(\nub_{\setminus j}^{(j)}, e\Vector{c}_{\setminus j}^{(j)}\right) < c_j^{(j)} \,, \quad j=1,\dots,t \,.		
	\end{aligned}
	\tag{INTSAGE-feas}
\end{align*}
\revise{
In the case where $f \in \intCsage$, we denote by $\tau_{\sage}(f)$ an upper bound of the bit size of the distance of $f$ to the boundary of $\Csage$.
}

Without the assumptions from~\cref{section:post_processing}, we state and analyze a decision algorithm to certify nonnegativity of signomials belonging to the interior $\intCsage$ of the SAGE cone. 
The resulting hybrid numeric-symbolic algorithm, called $\intsage$, computes exact rational SAGE  decompositions of such signomials.

\if{
For complexity analysis purpose, we recall the following bound for the roots of univariate polynomials with integer coefficients:
\begin{lemma}{~\cite[Theorem~4.2 (ii)]{Mignotte1992}}
	\label{lemma:mignotte}
	Let $f \in \Z[x]$ of degree $d$, with coefficient bit size bounded from above by $\tau$. 
	If $f(\tilde x) = 0$ and $\tilde x \neq 0$, then $\frac{1}{2^\tau + 1} \leq |\tilde x| \leq 2^\tau + 1$.
\end{lemma}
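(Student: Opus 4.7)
The plan is to adapt the classical Cauchy root bound, specialized to integer coefficients bounded by $2^\tau$. Write $f(x) = \sum_{i=0}^d a_i x^i$ with $a_i \in \Z$ and $|a_i| \leq 2^\tau$. Since the statement is invariant under removing trailing/leading zero terms, I would assume without loss of generality that $a_d \neq 0$, and, because $\tilde x \neq 0$ is by hypothesis a nonzero root, that $a_0 \neq 0$ as well (otherwise divide $f$ by the largest power of $x$; this only tightens the degree and coefficient bounds).

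For the upper bound, I would start from the identity $a_d \tilde x^d = -\sum_{i=0}^{d-1} a_i \tilde x^i$. The case $|\tilde x| \leq 1$ trivially gives $|\tilde x| \leq 2^\tau + 1$, so I focus on $|\tilde x| > 1$. Taking absolute values and using $|a_d| \geq 1$ together with $|a_i| \leq 2^\tau$ yields
\begin{align*}
|\tilde x|^d \;\leq\; 2^\tau \sum_{i=0}^{d-1} |\tilde x|^i \;=\; 2^\tau \, \frac{|\tilde x|^d - 1}{|\tilde x| - 1} \;\leq\; 2^\tau \, \frac{|\tilde x|^d}{|\tilde x|-1}.
\end{align*}
Dividing both sides by $|\tilde x|^d$ and rearranging gives $|\tilde x| - 1 \leq 2^\tau$, i.e., $|\tilde x| \leq 2^\tau + 1$, as required.

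For the lower bound, I would invoke the standard reciprocal trick: consider the reversed polynomial $\tilde f(x) = x^d f(1/x) = \sum_{i=0}^d a_{d-i} x^i$, which also has integer coefficients of bit size at most $\tau$ and the same degree. Since $a_0 \neq 0$ and $\tilde x \neq 0$, the value $1/\tilde x$ is a nonzero root of $\tilde f$. Applying the upper bound already proven to $\tilde f$ yields $|1/\tilde x| \leq 2^\tau + 1$, equivalently $|\tilde x| \geq 1/(2^\tau + 1)$, closing the argument.

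No step here is really hard; the only place to be careful is justifying the reduction to the case $a_0, a_d \neq 0$ so that the reciprocal polynomial has the same degree and the reciprocal trick is legitimate, and handling separately the edge case $|\tilde x| \leq 1$ in the upper-bound estimate. Since the statement is cited verbatim from Mignotte's book, in the paper one would presumably just quote it rather than reprove it.
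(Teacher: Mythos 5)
Your argument is correct: the Cauchy-type bound gives $|\tilde x|\leq 2^\tau+1$ (the case $|\tilde x|\leq 1$ being trivial), and the reciprocal polynomial $x^d f(1/x)$, whose leading coefficient is $a_0\neq 0$ after stripping the factor $x^{v}$ with $v$ the order of vanishing at $0$, yields the lower bound. The paper does not prove this lemma at all --- it is quoted verbatim from Mignotte --- so there is no in-paper proof to compare against; your derivation is the standard one underlying the cited result and is a valid self-contained substitute for the citation.
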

}\fi
%
%We start with the preliminary result:
%
\begin{lemma}
	\label{lemma:sage_distance}
	Let $f = \sum_{j=1}^t b_j \exp \, ( \ab(j) \cdot \x ) \in \intCsage$ of degree $d$ with $\tau = \tau (f)$. 
	Then, there exists $N \in \N$ such that for $\varepsilon := 2^{-N}$, $f - \varepsilon \sum_{j=1}^t \exp \, ( \ab(j) \cdot \x ) \in \Csage$, with \revise{$N = \tau_{\sage}(f)$}.
\end{lemma}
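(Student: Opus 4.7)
The plan is to exploit the openness of $\intCsage$: since $f$ lies strictly inside the SAGE cone, the maximal admissible $\varepsilon$ with $f - \varepsilon \sum_{j=1}^t \exp(\ab(j) \cdot \x) \in \Csage$ is strictly positive, and the goal is to lower bound this value by $2^{-N}$ for $N$ within the stated complexity.

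First, I would set
\begin{equation*}
\varepsilon^{\star} := \sup\Bigl\{\varepsilon \geq 0 : f - \varepsilon \sum_{j=1}^t \exp(\ab(j) \cdot \x) \in \Csage\Bigr\},
\end{equation*}
observing $\varepsilon^{\star} > 0$ by the hypothesis $f\in\intCsage$. It suffices to prove $\varepsilon^{\star} \geq 2^{-N}$ for $N \in \bigo{(\tau \cdot (4d+6)^{3n+3})}$, since then every dyadic $\varepsilon = 2^{-N'}\leq\varepsilon^{\star}$ gives an element of $\Csage$. I would realize $\varepsilon^{\star}$ as the optimal value of the convex program obtained by maximizing $\varepsilon$ subject to the constraints of~\cref{problem:SAGE-feas}, with $\Vector{b}$ replaced by $\Vector{b} - \varepsilon\,\Vector{1}$ and $\varepsilon$ added as a decision variable.

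Second, I would write down the KKT optimality conditions at the boundary. The only transcendental ingredient is the relative entropy $D(\nub_{\setminus j}^{(j)}, e\Vector{c}_{\setminus j}^{(j)})$, whose partial derivatives involve $\log(\nu_i^{(j)}/c_i^{(j)})$. Following the barrier-complexity route for REP acknowledged to R.~Murray, I would introduce auxiliary variables $u_i^{(j)} = \nu_i^{(j)}/c_i^{(j)}$ and exponentiate the stationarity conditions, turning the KKT system into a polynomial system $\mathcal{S}$ in the variables $\Vector{c}^{(j)},\nub^{(j)},u_i^{(j)}$, their associated dual multipliers, and $\varepsilon$. The system has $\bigo{(t^2)}$ variables and $\bigo{(t^2)}$ equations of degree $\bigo{(d)}$, inherited from the columns $\ab(j)$ (bounded coordinate-wise by $d$) and the linear block $\Matrix{Q}\nub^{(j)}=\Vector{0}$ together with the sign and sum constraints.

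Third, a Bézout-type elimination applied to $\mathcal S$ produces a univariate $P\in\Z[\varepsilon]$ with $P(\varepsilon^{\star})=0$, of degree at most $(4d+6)^{3n+3}$ and coefficient bit size in $\bigo{(\tau\cdot (4d+6)^{3n+3})}$; the factor $3n+3$ reflects the ambient dimension after eliminating the primal, dual, and slack variables down to the $\x$-block plus $\varepsilon$, and the base $4d+6$ absorbs the exponentiation step that clears the logarithms. Applying Mignotte's root bound (\cref{lemma:mignotte}) to $P$ yields $\varepsilon^{\star} \geq 1/(2^{T}+1)$ with $T\in\bigo{(\tau\cdot(4d+6)^{3n+3})}$, so taking $N := T+1$ finishes the argument.

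The main obstacle is the algebraization in the second step together with the elimination bookkeeping in the third: the KKT conditions for an REP mix linear and sign constraints with logarithms, and the substitutions $u_i^{(j)} = \nu_i^{(j)}/c_i^{(j)}$ followed by exponentiation must be tracked carefully so that the resulting polynomial system has the degree and bit-size profile that drives the final Mignotte estimate. Once that profile is pinned down, the degree-$(4d+6)^{3n+3}$ and coefficient-$\bigo{(\tau\cdot(4d+6)^{3n+3})}$ bounds feed mechanically into the root separation lemma.
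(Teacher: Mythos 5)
Your overall strategy --- lower-bound the admissible $\varepsilon$ by exhibiting a univariate integer polynomial vanishing at the critical value and invoking Mignotte's bound (\cref{lemma:mignotte}) --- matches the paper's in spirit, but the step that is supposed to produce that polynomial does not go through, and it is exactly the obstacle you flag without resolving. The KKT system of the relative entropy program is not algebraic: after substituting $u_i^{(j)} = \nu_i^{(j)}/c_i^{(j)}$, the entropy constraint still reads $\sum_{i} \nu_i^{(j)}\log u_i^{(j)} \leq c_j^{(j)} + \sum_i \nu_i^{(j)}$, and exponentiating it produces $\prod_i \bigl(u_i^{(j)}\bigr)^{\nu_i^{(j)}}$, a term with \emph{variable} exponents that no finite polynomial substitution removes. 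Consequently $\varepsilon^{\star}$ is in general not an algebraic number, so no $P\in\Z[\varepsilon]$ with $P(\varepsilon^{\star})=0$ exists and the elimination step collapses. Even granting an algebraic surrogate, your system has $\bigo{(t^2)}$ variables, so any B\'ezout-type elimination yields degree and coefficient size exponential in $t^2$ rather than $(4d+6)^{3n+3}$; the exponent $3n+3$ cannot ``reflect the $\x$-block'' because $\x$ does not occur in the optimality system of the REP at all (it is integrated out by \cref{lemma:age}). As written, the target degree and bit-size profile is asserted rather than derived.

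The paper avoids the REP entirely. Existence of some $\varepsilon = 2^{-N}$ is soft: the strict inequalities $D\bigl(\nub^{(j)}_{\setminus j}, e\Vector{c}^{(j)}_{\setminus j}\bigr) < c_j^{(j)}$ defining $\intCsage$ leave room to subtract $\varepsilon$ from $b_j$ and from $c_j^{(j)}$ simultaneously, giving an explicit feasible point for the perturbed signomial. The quantitative bound then comes from an algebraic problem in only $n+1$ variables: after the change of variables $y_i = \exp x_i$, one controls $\varepsilon$ through $\inf_{\y\in\R_{>0}^n} \sum_j b_j\y^{\ab(j)} \big/ \sum_j \y^{\ab(j)}$, whose critical values are among the roots of a univariate integer polynomial of degree at most $(d+1)^{n+1}$ with coefficient bit size at most $\tau\cdot(4d+6)^{3n+3}$ by \cite[Proposition~A.1]{magron2018exact}; Mignotte's bound then controls $N$. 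The missing idea in your proposal is precisely this reduction to an algebraic optimization problem in the original $n+1$ ambient variables, instead of the transcendental, $\bigo{(t^2)}$-dimensional optimality system of the REP.
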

\begin{proof}
	Since $f \in \intCsage$, there are $\Vector{c}^{(1)},\dots, \Vector{c}^{(t)}$, $\nub^{(1)},\dots,\nub^{(t)}$ $\in \R^t$ such that 
	$\sum_{j=1}^t \Vector{c}^{(j)} = \Vector{b}$, 
	$\sum_{i=1}^t \ab(i) \nub_i^{(j)} = \Vector{0}$, 
	$-\Vector{1} \cdot \nub_{\setminus j}^{(j)} = \nu_j^{(j)}$, 
	$\Vector{c}_{\setminus j}^{(j)},	\nub_{\setminus j}^{(j)} > \Vector{0}$ and $D\left(\nub_{\setminus j}^{(j)}, e\Vector{c}_{\setminus j}^{(j)}\right) < c_j^{(j)}$, for all $j=1,\dots,t$. 
	Therefore, there exists $N \in \N$ such that for $\varepsilon := 2^{-N}$, one has $D\left(\nub_{\setminus j}^{(j)}, e\Vector{c}_{\setminus j}^{(j)}\right) + \varepsilon < c_j^{(j)}$, for all $j=1,\dots,t$. 
	For all $i,j=1,\dots,t$, let us define $\mathring{\Vector{b}}$ by $\mathring{b}_i := b_i - \varepsilon$, 
	as well as $\mathring{\Vector{c}}$ by $\mathring{c}_i^{(j)} := c_i^{(j)}$ for $i \neq j$ and $\mathring{c}_j^{(j)} := c_j^{(j)} - \varepsilon$. 
	Note that $\mathring{\Vector{b}}$ is the coefficient vector of $f - \varepsilon \sum_{j=1}^t \exp \, ( \ab(j) \cdot \x )$.	
	Then $\mathring{\Vector{c}}^{(1)},\dots, \mathring{\Vector{c}}^{(t)}$, $\nub^{(1)},\dots,\nub^{(t)}$ satisfy~\cref{problem:intSAGE} after replacing $\Vector{b}$ by $\mathring{\Vector{b}}$, yielding the first claim.

\revise{The second claim follows from the definition of the interior of the SAGE cone.}
\if{
	For the second claim, we start to perform the change of variable $y_i := \exp x_i $, for all $i=1,\dots,n$, define $g(\y,z) := \sum_{j=1}^t b_j \y^{\ab(j)} - z \sum_{j=1}^t \y^{\ab(j)}$, for all $\y \in \R_{>0}^n$.
	It is enough to select $\varepsilon = 2^{-N}$ such that $\varepsilon \leq \inf_{\y \in \R_{>0}^n}  \frac{\sum_{j=1}^t b_j \y^{\ab(j)}}{\sum_{j=1}^t  \y^{\ab(j)}}$. 
	%As in the proof of~\cite[Proposition~3.1]{magron2018exact}, 
	Let us consider the algebraic set $V$ defined by:
	\[
		V := \left\{ (\y,z) \in \R^{n+1} : g(\y,z)  = \frac{\partial g }{\partial y_1} = \dots = \frac{\partial g}{\partial y_n} = 0 \right\} \,.
	\]	
	%Note that $V$ contains the point $(\y^\star,g(\y^\star))$, where $\y^\star$ stands for a minimizer of $\frac{\sum_{j=1}^t b_j \y^{\ab(j)}}{\sum_{j=1}^t  \y^{\ab(j)}}$ on $\R_{>0}^n$.
%	Let $A$ be the projection of $V \cap \R^n$ on the $e$-axis. 
	%This set $V$ contains the minimizers of $g(\y)$.
	Using~\cite[Proposition~A.1]{magron2018exact}, there exists a polynomial in $\Z[z]$ of degree less than $(d+1)^{n+1}$ with coefficients of bit size less than $\tau \cdot (4 d + 6)^{3 n + 3}$ such that its set of real roots contains $V$. 
	By~\cref{lemma:mignotte}, it is enough to take $N \leq \tau \cdot (4 d + 6)^{3 n + 3}$, yielding the desired result.
}\fi
\end{proof}

\subsection{Algorithm $\intsage$}
\label{section:intsage}
We present our algorithm $\intsage$ computing exact rational SAGE decompositions for signomials in $\intCsage$.
\begin{algorithm}
$\intsage$
\label{algorithm:intsage}
	\begin{algorithmic}[1]
		\Require $f = \sum_{j=1}^t b_j \exp \, ( \ab(j) \cdot \x ) \in \intCsage$,  rounding precision $\hat{\delta} \in \Q_{>0}$, precision parameter $\tilde{\delta} \in \Q_{>0}$ for the REP solver.
		\Ensure Matrices $\Vector{c}, \nub$ of rational numbers.
		\State Build the $(n+1) \times t$ matrix $\Matrix{Q}$ with columns $(\ab(1),1), \dots, (\ab(t),1)$
		\State $\Matrix{Q}^+ \gets \pseudoinvfun{\Matrix{Q}}$ \label{line:pseudoinv}
		\State ok $\gets$ false
		\While {not ok} \label{line:deltai}
			\State $(\tilde{\Vector{c}}, \tilde{\nub}) \gets \repfun{f}{\tilde{\delta}}$ \label{line:sage}
			\State $\hat{\Vector{c}} \gets \roundfun{\tilde{\Vector{c}}}{\hat{\delta}} \,, \quad \hat{\nub} \gets \roundfun{\tilde{\nub}}{\hat{\delta}}$ \label{line:round} \Comment{rounding step}
			\For {$j \in \{1,\dots,t\}$} \Comment{projection step}
				\State $\nub^{(j)} \gets (\Matrix{I} - \Matrix{Q}^+ \Matrix{Q}) \, \hat{\nub}^{(j)} $ \label{line:proj} 
				\State $\Vector{c}_{\setminus j}^{(j)} \gets \hat{\Vector{c}}_{\setminus j}^{(j)} \,, \quad c^{(j)}_j \gets b_j - \oneb \cdot \Vector{c}_{\setminus j}^{(j)}$
			\EndFor
			\If {for all $j \in \{1,\dots,t\}$, $\nub^{(j)}_{\setminus j}, \Vector{c}^{(j)}_{\setminus j} \geq \Vector{0}$, $c_j^{(j)} \geq D\left(\nub^{(j)}_{\setminus j},e \Vector{c}_{\setminus j}^{(j)}\right)$,  } 
				ok $\gets$ true \label{line:sageok} \Comment{verification step}
			\Else $\ \tilde{\delta} \gets \tilde{\delta}/2$, $\hat{\delta} \gets \hat{\delta}/2$
			\EndIf
		\EndWhile \label{line:deltaf}
		\State \Return $\Vector{c}$, $\nub$
	\end{algorithmic}
\end{algorithm}

The routine $\pseudoinv$ in~\cref{line:pseudoinv} computes the \struc{\textit{pseudo-inverse}} of $\Matrix{Q}$, i.e., a matrix $\Matrix{Q}^+$ such that $\Matrix{Q} \Matrix{Q}^+ \Matrix{Q} = \Matrix{Q}$. 
%$\Matrix{Q}^+ \Matrix{Q} \Matrix{Q}^+ = \Matrix{Q}^+$, $\Matrix{Q} \Matrix{Q}^+$ and $\Matrix{Q}^+ \Matrix{Q}^+$ are symmetric. 
Next, we enter in the loop starting from~\cref{line:deltai}.  
The $\rep$ function calls an REP solver to compute a $\tilde{\delta}$-approximation $(\tilde{\nub}, \tilde{\Matrix{c}})$ of~\cref{problem:intSAGE}. 
The projection steps ensure that $(\nub, \Vector{c})$ satisfies exactly the linear equality constraints of ~\cref{problem:SAGE-feas}, i.e.,~$\Matrix{Q} \nub^{(j)} = \Matrix{Q}  (\Matrix{I} - \Matrix{Q}^+ \Matrix{Q}) \nub^{(j)} = \Matrix{Q} - \Matrix{Q} \Matrix{Q}^+ \Matrix{Q} = \Vector{0}$ and $\sum_{j=1}^t \c^{(j)} = b$. 
If the inequality constraints are not verified in~\cref{line:sageok}, the rounding-projection procedure is performed again with more accuracy.

\subsection{Arithmetic Complexity}
Before analyzing the arithmetic complexity of~$\intsage$, we first establish lower bounds for the nonnegative components of the solutions related to SAGE decompositions of polynomials in $\intCsage$. %(see~\cref{appendix:proof} for a proof).
\begin{lemma}
	\label{lemma:nuc}
	Let $f = \sum_{j=1}^t b_j \exp \, ( \ab(j) \cdot \x ) \in \intCsage$ of degree $d$ with $\tau = \tau (f)$. Let $\varepsilon$ be as in~\cref{lemma:sage_distance}.
	\begin{enumerate}
		\item There exists a solution of  $(\nub,\Vector{c})$ of~\cref{problem:intSAGE} and  $\delta \in \Q_{>0}$ such that $\delta \leq 1$, $(\nub,{\Vector{c}})$ satisfies, $D\left({\nub}_{\setminus j}^{(j)}, e \left({\Vector{c}}_{\setminus j}^{(j)} + \delta \oneb\right) \right) + \frac{\varepsilon}{2} \leq {c}_j^{(j)}$, $\sum_{j=1}^t \Vector{c}^{(j)} = \Vector{b}$, for all $i,j=1,\dots,t$.
		\item There exists a solution $(\nub,\Vector{c})$ of~\cref{problem:intSAGE} and  $\delta \in \Q_{>0}$ such that $(\nub,{\Vector{c}})$ satisfies $D\left({\nub}_{\setminus j}^{(j)} + \delta \oneb , e {\Vector{c}}_{\setminus j}^{(j)} \right) + \frac{\varepsilon}{2} \leq {c}_j^{(j)}$, for all $j=1,\dots,t$.	
		\item There exists a solution $(\nub,\Vector{c})$ of~\cref{problem:intSAGE} and  $\delta \in \Q_{>0}$ such that $(\nub,{\Vector{c}})$ satisfies $D\left((1+ \delta){\nub}_{\setminus j}^{(j)}, e {\Vector{c}}_{\setminus j}^{(j)} \right) + \frac{\varepsilon}{2} \leq {c}_j^{(j)}$, for all $j=1,\dots,t$.
	\end{enumerate}	
%	In each case, $\tau (\delta) \in \bigo{(\tau \cdot (4d+6)^{3n+3})}$.
\revise{
In each case, $\tau (\delta) \in \bigo{(\tau_{\sage}(f)+\tau + \log t )} = \bigo{(\tau_{\sage}(f)+\tau + n \log d )}$.
}
\end{lemma}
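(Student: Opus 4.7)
The plan is to extend the construction from~\cref{lemma:sage_distance} to control three different perturbations of a strictly feasible SAGE decomposition. By hypothesis $f \in \intCsage$, there is a solution $(\nub^*, \Vector{c}^*)$ of~\cref{problem:intSAGE}, and the same root-bound argument used in~\cref{lemma:sage_distance} (applied to $f - \varepsilon\sum_{j=1}^t \exp(\ab(j)\cdot \x)$) may be invoked to select it with an entropy gap of at least $\varepsilon$, namely $D(\nub^{*(j)}_{\setminus j}, e\Vector{c}^{*(j)}_{\setminus j}) + \varepsilon \leq c_j^{*(j)}$ for every $j$. Each claim then asks for a rational $\delta$ for which a slack of $\varepsilon/2$ survives a specific perturbation: additive in $\Vector{c}^{(j)}_{\setminus j}$ for Claim~(1), additive in $\nub^{(j)}_{\setminus j}$ for Claim~(2), and multiplicative in $\nub^{(j)}_{\setminus j}$ for Claim~(3).

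Claim~(1) is essentially free. Since $\partial_\delta D(\nub, e(\Vector{c}+\delta\oneb)) = -\sum_i \nu_i/(c_i+\delta) \leq 0$, the entropy is monotone decreasing in $\delta$, so
\[
	D\bigl(\nub^{*(j)}_{\setminus j}, e(\Vector{c}^{*(j)}_{\setminus j}+\delta\oneb)\bigr) \ \leq \ D\bigl(\nub^{*(j)}_{\setminus j}, e\Vector{c}^{*(j)}_{\setminus j}\bigr) \ \leq \ c_j^{*(j)} - \varepsilon \ \leq \ c_j^{*(j)} - \varepsilon/2
\]
for any $\delta \in (0,1]$. Taking $\delta = 1$ gives $\tau(\delta) = 1$, trivially within the stated bound.

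Claims~(2) and~(3) require quantitative continuity of $D(\cdot, e\Vector{c})$. A first-order Taylor expansion at $\delta = 0$ yields
\[
	D(\nub + \delta\oneb, e\Vector{c}) - D(\nub, e\Vector{c}) \ \leq \ \delta \sum_i \bigl(|\log(\nu_i/(ec_i))| + 1\bigr) + \bigo{(\delta^2/\min_i \nu_i)} \,,
\]
and analogously $D((1+\delta)\nub, e\Vector{c}) - D(\nub, e\Vector{c}) \leq \delta \bigl|\sum_i \nu_i \log(\nu_i/c_i)\bigr| + \bigo{(\delta^2)}$. It therefore suffices to choose $\delta$ making the right-hand side at most $\varepsilon/2$, which reduces to bounding the positive components $\nu_i^*, c_i^*$ from below.

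The hard part will be these individual coordinate bounds. I would mirror the proof of~\cref{lemma:sage_distance}: after the substitution $y_i := \exp x_i$, the KKT system characterizing the deepest-slack solution of~\cref{problem:intSAGE} becomes polynomial in $(\y, \nub, \Vector{c})$, with degree and coefficient bit size controlled by~\cite[Proposition~A.1]{magron2018exact}. Elimination then provides, for each positive $\nu_i^*$ (resp.~$c_i^*$), a defining polynomial in $\Z[t]$ of degree at most $(d+1)^{n+1}$ and coefficient bit size at most $\tau \cdot (4d+6)^{3n+3}$, so by~\cref{lemma:mignotte} we get $\nu_i^*, c_i^* \geq 2^{-M}$ with $M \in \bigo{(\tau \cdot (4d+6)^{3n+3})}$; the same bound controls the logarithms. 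Setting $\delta := 2^{-(N+M+C)}$ for a suitable constant $C$ then yields the desired $\varepsilon/2$-slack in Claims~(2) and~(3), and $\tau(\delta) = N + M + \bigo{(1)} \in \bigo{(\tau \cdot (4d+6)^{3n+3})}$, as required.
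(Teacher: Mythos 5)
Your argument for Claim (1) is correct as far as the literal statement goes, and simpler than the paper's: monotonicity of $D$ in its second argument gives the inequality for any $\delta\leq 1$. Be aware, though, that the paper's proof of (1) does more work on purpose: it builds a \emph{new} feasible point whose off-diagonal entries $c_i^{(j)}$ ($i\neq j$) are all bounded below by $\delta=\varepsilon/(2(t-1))$, compensating on the diagonal so that $\sum_{j}\Vector{c}^{(j)}=\Vector{b}$ is preserved. That explicit lower bound $\varepsilon_{\Vector{c}}$ on the positive $c$-entries is exactly the ingredient consumed in the proofs of (2) and (3); your reading of (1) throws it away.

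The genuine gap is in Claims (2) and (3). Your Taylor bound carries the terms $|\log(\nu_i/(ec_i))|$ and $\bigo{(\delta^2/\min_i\nu_i)}$, so you truly need $\min_i\nu_i^*$ (and an upper bound on the $\nu_i^*$) with controlled bit size. The route you propose for this --- writing the KKT system of the ``deepest-slack'' solution of \cref{problem:intSAGE} and invoking the elimination bound plus \cref{lemma:mignotte} --- does not go through: the constraints involve $\nu_i\log(\nu_i/(ec_i))$, so the KKT equations are transcendental rather than polynomial, and there is no reason the optimal $\nu_i^*$, $c_i^*$ should be algebraic numbers of degree $(d+1)^{n+1}$ and height $\tau\cdot(4d+6)^{3n+3}$. (\cref{lemma:sage_distance} escapes this because its critical system involves only the variables $(\y,z)$ of the genuinely polynomial function $g$, never the decomposition variables $\nub,\Vector{c}$.) The paper sidesteps the issue entirely: it never bounds $\nu_i$ from below. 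Instead it splits each summand of $D\left(\tilde{\nub}_{\setminus j}^{(j)}+\delta\oneb,\, e\tilde{\Vector{c}}_{\setminus j}^{(j)}\right)$ into the cases $\nu_i\leq\delta$, $\delta\leq\nu_i\leq 1$, $\nu_i\geq 1$, exploiting that $x\mapsto x\log x$ is bounded and nonpositive on $(0,1)$, so that arbitrarily small $\nu_i$ contribute at most $2\delta\,|\log(e\varepsilon_{\Vector{c}})|$; only the lower bound $\varepsilon_{\Vector{c}}$ on the $c$-entries, supplied by Claim (1), is needed. To repair your proof you would need either to adopt such a case analysis or to actually establish a bit-size-controlled lower bound on $\min_i\nu_i^*$; as written, that step fails.
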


A proof for this lemma is provided in \cref{appendix:proof}.

\begin{theorem}
	\label{th:main} 
	Let $f = \sum_{j=1}^t b_j \exp \, ( \ab(j) \cdot \x ) \in \intCsage$ of degree $d$ and $\tau = \tau (f)$. 
%	There exist $\hat{\delta}$ and $\tilde{\delta}$ of bit size less than $\bigo{(\tau \cdot (4 d + 6)^{3n+3})}$, such that~$\intsagefun{f}{\hat{\delta}}{\tilde{\delta}}$ terminates and outputs a rational SAGE decomposition of $f$ within $\bigo{(\tau \cdot (4 d + 6)^{3n+3} \, t^7 \log t)}$ arithmetic operations. 
\revise{
There exist $\hat{\delta}$ and $\tilde{\delta}$ of bit size less than $\bigo{(\tau_{\sage}(f)+\tau + t \log t )}$, such that~$\intsagefun{f}{\hat{\delta}}{\tilde{\delta}}$ terminates and outputs a rational SAGE decomposition of $f$ within \[
\bigo{( (\tau_{\sage}(f)+\tau + t \log t ) \cdot  t^7 \log t )}\]
arithmetic operations.
}
\end{theorem}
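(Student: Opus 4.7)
The plan is to show that precisions $\hat\delta$ and $\tilde\delta$ of bit size $\bigo{(\tau \cdot (4d+6)^{3n+3})}$ suffice for the first pass through the while loop of $\intsage$ to pass the verification step, and then to bound the arithmetic cost of that single pass.

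First, I would quantify the slack available in an exact feasible tuple. By \cref{lemma:sage_distance}, there is $\varepsilon = 2^{-N}$ with $N \in \bigo{(\tau \cdot (4d+6)^{3n+3})}$ such that $f$ lies at entropy-distance at least $\varepsilon$ from the boundary of the SAGE cone. By \cref{lemma:nuc}, there is an exact interior tuple $(\nub,\Vector{c})$ feasible for \cref{problem:intSAGE} whose components $\nu^{(j)}_{\setminus j}$ and $c^{(j)}_{\setminus j}$ are bounded below by some $\delta$ with $\tau(\delta) \in \bigo{(\tau \cdot (4d+6)^{3n+3})}$, and whose entropy inequalities retain slack at least $\varepsilon/2$ under separate additive perturbations of $\Vector{c}$, additive perturbations of $\nub$, and multiplicative perturbations of $\nub$. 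A REP solver called at accuracy $\tilde\delta$ returns $(\tilde{\Vector{c}},\tilde{\nub})$ within $\ell_\infty$-distance $\tilde\delta$ of such a strictly feasible tuple; the rounding step adds at most $\hat\delta$ coordinatewise, and the projection $(\Matrix{I}-\Matrix{Q}^+\Matrix{Q})$ applied to $\hat{\nub}^{(j)}$ multiplies the preceding error by at most $\|\Matrix{Q}^+\Matrix{Q}\|_\infty$, a polynomial factor in $t$ controlled by the bit size of $\Matrix{Q}$.

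Second, I would verify every constraint of \cref{problem:SAGE-feas} for the projected tuple. The linear equalities $\Matrix{Q}\nub^{(j)}=\Vector{0}$ and $\sum_j \Vector{c}^{(j)}=\Vector{b}$ hold exactly by construction, and nonnegativity of $\nub^{(j)}_{\setminus j}$ and $\Vector{c}^{(j)}_{\setminus j}$ is preserved whenever the total rounding-plus-projection perturbation is below $\delta$. For the entropy inequality, I would invoke Lipschitz continuity of $D(\cdot,\cdot)$ on any compact subset of the positive orthant bounded away from zero by $\delta$: under the slack regime provided by \cref{lemma:nuc}, the deviation of $D(\nub^{(j)}_{\setminus j}, e \Vector{c}^{(j)}_{\setminus j})$ from its exact-solution value is dominated by $\varepsilon/2$ once the perturbation is small enough, which preserves $D(\nub^{(j)}_{\setminus j}, e\Vector{c}^{(j)}_{\setminus j}) \leq c_j^{(j)}$. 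The three variants of \cref{lemma:nuc} are used in tandem, each absorbing exactly one of the three error sources: rounding of $\Vector{c}$ (variant 1), rounding of $\nub$ (variant 2), and projection of $\nub$ (variant 3). Picking $\hat\delta,\tilde\delta$ of the stated bit size then guarantees that all constraints hold at \cref{line:sageok} on the first iteration.

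Finally, I would account for the arithmetic cost. The pseudo-inverse of $\Matrix{Q}$, rounding, projection, and verification each cost at most polynomially many operations in $t$ and $n$, absorbed into the leading term. The dominant cost is the single REP solve at the final precision: using the self-concordant barrier for the relative entropy cone whose complexity was analyzed by Murray, an interior-point method reaches accuracy $\tilde\delta$ in $\bigo{(\sqrt{t}\log(1/\tilde\delta))}$ Newton steps, each of cost polynomial in $t$ for the Newton system associated with the $t$ stacked AGE cones. A careful bookkeeping gives $\bigo{(t^7 \log t \cdot \log(1/\tilde\delta))}$ arithmetic operations, and substituting $\log(1/\tilde\delta)\in\bigo{(\tau \cdot (4d+6)^{3n+3})}$ yields the stated bound. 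The main obstacle is the second step: precisely bounding, in terms of $\delta$ and $\varepsilon$, how large a perturbation of a strictly interior tuple can be tolerated while keeping the entropy inequality intact. This is exactly where the three separate variants of \cref{lemma:nuc} prove indispensable, since each distinct form of slack matches one of the distinct perturbations introduced by $\intsage$.
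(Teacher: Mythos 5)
Your proposal follows essentially the same route as the paper: Lemma~\ref{lemma:sage_distance} supplies the slack $\varepsilon$, Lemma~\ref{lemma:nuc} absorbs the rounding and projection perturbations (the paper likewise controls the projection error through the $\bigo{(t\log t)}$ bit size of the entries of $\Matrix{Q}^+$), and the cost is dominated by a single interior-point REP solve. One small correction on the complexity accounting: the \cref{problem:SAGE-feas} formulation involves $t$ AGE cones each with $t-1$ relative entropy terms of barrier parameter $4$, so the barrier complexity is $N\leq 4t^2$ and the iteration count is $\bigo{(\sqrt{N}\log N\cdot\tau(\tilde{\delta}))}=\bigo{(t\log t\cdot\tau(\tilde{\delta}))}$, not $\bigo{(\sqrt{t}\log(1/\tilde{\delta}))}$; combined with the $\bigo{(t^6)}$ cost per iteration this is what actually produces the stated $t^7\log t$ factor.
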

\begin{proof}
	We first show that the loop of Algorithm~$\intsage$ terminates with $\hat{\delta}$ and $\tilde{\delta}$ of bit size bounded by \revise{$\bigo{(\tau_{\sage}(f)+\tau + t \log t )}$}. 
	Let $\varepsilon$ be as in~\cref{lemma:sage_distance}. When running the procedure~$\rep$, one solves~\cref{problem:SAGE-feas} at precision $\tilde{\delta}$, thus one finds an approximate solution $(\tilde{\nub}, \tilde{\Vector{c}})$ such that $\Vert \sum_{j=1}^t \tilde{\Vector{c}}^{(j)} - \Vector{b} \Vert_\infty \leq \tilde{\delta}$, $ D\left({\tilde{\nub}}_{\setminus j}^{(j)}, e {\tilde{\Vector{c}}}_{\setminus j}^{(j)} \right)   + \varepsilon  \leq \tilde{c}_j^{(j)} + \tilde{\delta}$, and $\Vert \Matrix{Q} \tilde{\nub}^{(j)} \Vert_\infty \leq \tilde{\delta}$, for all $j=1,\dots,t$. 
	After the rounding and projection steps, one obtains $\nub^{(j)} =   (\Matrix{I} - \Matrix{Q}^+ \Matrix{Q}) \hat{\nub}^{(j)}$ and $\Vert \hat{\nub}^{(j)} - \tilde{\nub}^{(j)}\Vert_\infty \leq \hat{\delta}$.
	Since $\ab_j(i) \leq d$, for all $i,j=1,\dots,t$, the bit size of the entries of the matrix $\Matrix{Q}$ is upper bounded by $\tau(d)$.
	Thus, the pseudo-inverse $\Matrix{Q}^+$ has rational entries of bit size bounded by $\bigo{(t \log t + t \log d )} = \bigo{(t \log t)}$, since the bit size is the same as for the determinant length, see~\cite[Corollary~8.13]{BPR98}.
	This implies that the bit size of the difference between the entries of $\tilde{\nub}$ and $\nub$ is upper bounded by $\bigo{\Big( t \log t + \tau(\tilde{\delta}) + \tau(\hat{\delta}) \Big)}$. 
	Similarly, the bit size of the difference between the entries of $\tilde{\Matrix{c}}$ and $\Matrix{c}$ is upper bounded by $\bigo{\Big(\tau +   \tau(\hat{\delta}) \Big)}$. 
	By~\cref{lemma:nuc}, one can perform any absolute or relative perturbation of $\tilde{\nub}$ and $\tilde{\Vector{c}}$, and still ensure that the resulting $(\nub, \Vector{c})$ satisfies $D\left(\nub^{(j)}_{\setminus j},e \Vector{c}_{\setminus j}^{(j)}\right) \leq c_j^{(j)}$, if the perturbation is small enough with bit size at most \revise{$\bigo{(\tau_{\sage}(f)+\tau + \log t )}$}. This implies that one must choose $\tilde{\delta}$ and $\hat{\delta}$ small enough, with an upper bound of \revise{$\bigo{(\tau_{\sage}(f)+\tau + t \log t )}$} on their bit sizes.
	The same reasoning applies to ensure that $\Vector{\nu}_{\setminus j}^{(j)}, \Vector{c}_{\setminus j}^{(j)} \geq \Vector{0}$.
%		
%	By~\cref{lemma:nuc}, there exists $\tilde{\delta}$ of bit size less than $\bigo{(\tau \cdot (4 d + 6)^{3n+3})}$ such that the numerical solution $(\tilde{\nub}, \tilde{\Vector{c}})$ verifies 	
	
	Now, we give an upper bound on the number of arithmetic operations. For convex optimization problems having barrier complexity equal to $N$, the standard interior-point methods compute a $\tilde{\delta}$-accurate solution in $\bigo{( \tau(\tilde{\delta}) \sqrt{N}  \log(N) )}$ iterations; see e.g.~\cite[Section~2.4]{Renegar01}. 
	For $(x,y,z) \in \R^3$, the standard barrier complexity of a single relative entropy constraint ``$x \log(x / y) \leq z, x, y \geq 0$'' is equal to 4.  
    In addition, the barrier complexity of a set of constraints is upper bounded by the sum of the complexities of the individual constraints.
    Therefore, the relative entropy formulation given in~\cref{problem:SAGE-feas} has a barrier complexity of $N \leq 4 t^2$.
	At each iteration of the interior-point method, one needs to solve an LP involving $2 t^2$ variables, which can be done within $\bigo{(t^6)}$ arithmetic operations. This yields the upper bound of \revise{$\bigo{( (\tau_{\sage}(f)+\tau + t \log t ) \cdot  t^7 \log t )}$} on the total number of arithmetic operations required while calling~$\intsage$. All other arithmetic operations performed by the algorithm have a negligible cost with respect to the $\rep$ procedure.
%	
%	After the rounding step from~\cref{line:round}, one has $\hat{\nu}_i^{(j)} = \tilde{\nu}_i^{(j)} \cdot (1 + \gamma_{i,j})$ and $\hat{c}_i^{(j)} = \tilde{c}_i^{(j)} \cdot (1 + \delta_{i,j})$, with $\vert \gamma_{i,j} \vert \leq \hat{\delta}$ and $\vert \delta_{i,j} \vert \leq \hat{\delta}$, for all $i,j=1,\dots,t$. 
%	Let us note $P := \Matrix{I} - \Matrix{Q}^+ \Matrix{Q}$ with rows $\p_i$, for $i=1,\dots,t$. After the projection step, one has $\nu_i^{(j)} = \p_i \tilde{\nub}^{(j)} \cdot (1 + \gamma_{i,j})$
\end{proof}

\section{Experimental Comparisons}
\label{section:benchs}
%sql-command to get average bit size and time, depending on variables, degree, terms
%select variables, degree, terms, round(avg(bitsize)), (round(100*avg(time)))/100 from symbolic, polynomial where symbolic.poly_id = polynomial.rowid group by variables, degree, terms order by terms, degree;

We discuss the actual bit sizes and physical running time of $\optsonc$ and $\optsage$ procedures, given by \cref{algorithm:SONC_post} and \cref{algorithm:SAGE_post}.
We describe the setup of our experiment and on which instances the algorithms were tested.
Afterwards, we discuss our findings from running the algorithms on a large set of examples.
%Afterwards, we  discuss a few selected examples, which exhibit well the differences of the methods, and  present how the program behaved on a large set of examples.

\subsection{Experimental Setup}
\label{subsection:experimental:setup}
We give an overview about the experimental setup.

\textbf{Software} 
	The entire experiment was steered by our \textsc{Python} 3.7 based software \textsc{POEM} 0.2.0.0(a) (Effective Methods in Polynomial Optimization), \cite{poem:software}, which we develop since July 2017. 
	\textsc{POEM} is open source, under GNU public license, and available at:
	\begin{center}
		\url{https://www3.math.tu-berlin.de/combi/RAAGConOpt/poem.html}
	\end{center}
	For our experiment, \textsc{POEM} calls a range of further software and solvers for computing the certificates.
	For the numerical solutions of SONC and SAGE, we use \textsc{CVXPY} 1.0.12 \cite{cvxpy}, to create the convex optimization problems, together with the solver \textsc{ECOS} 2.0.7 \cite{ecos}.
	The symbolic computations were done in \textsc{SymPy} 1.3 \cite{Joyner:et:al:SymPy}.

\textbf{Investigated Data} 
	We carried out our experiments on \instances{} randomly generated polynomials.
	%The experiment was carried out on a database containing \instances{} randomly generated polynomials.
	The possible numbers of variables are $n=\variables$; the degree takes values $d = \degrees$ and the number of terms can be $t=\Myterms$.
	For each combinations we create instances, where the number of negative terms is one of a few fixed ratios of $t$.
	In particular, the size of \cref{problem:SAGE} grows quadratically in $t$.
	We created the database using \textsc{POEM}, and it is available in full at the homepage cited above.
	Our instances are a subset of those from \cite{seidler2018experimental}.
	In that paper, we also describe their creation in more detail.
	The overall running time for all our instances was \totalTime{} seconds.

\textbf{Hardware and System} 
	We used an \verb+Intel Core i7-8550U+ CPU with 1.8 GHz, 4 cores, 8 threads and 16 GB of RAM under Ubuntu 18.04 for our computations.

\textbf{Stopping Criteria}
	For the accuracy of the solver and the precision of the rounding in \textsc{Python} we used a tolerance of $\varepsilon=2^{-23}$.
	The restriction $t\leq 50$ was chosen, since otherwise we already encounter problem in the numerical solution of \cref{problem:SAGE}.
	The bound $d<30$ was chosen, because for large degree we had a significant increase in the memory required to perform the rounding.
	Both thresholds were obtained experimentally.
	%\item[Runtime and Memory]

\subsection{Evaluation of the Experiment}
\label{subsection:evaluation}

In this section we present and evaluate the results of our experiment and highlight our most important findings, when investigating the computational data.
We focus on the results given by the procedure $\optsage$ (\cref{algorithm:SAGE_post}) via SAGE decompositions and in the end give a comparison to $\optsonc$.

\textbf{Running time \emph{decreases} with growing number of variables.}
	The formulation of \cref{problem:SAGE} shows that the size of the problem only depends on the number of terms $t$, but in the SAGE decomposition, the number of summands is the number of monomial non-squares.
	Most significantly, for more variables, our generating algorithm simply results in a smaller number of these terms.
	Additionally, for $n\geq 8$ and $d \leq 10$, most exponents lie on faces of the Newton polytope.
	This leads to a simpler combinatorial structure, which we believe to result in lower bit sizes and thus in faster solving faster the exact LP from~\cref{line:lp} of~$\optsonc$.
	Next, we have more equality constraints in this LP, which could also improve the running time.
	Lastly, the exponential upper bound is just the worst case, which does not seem to actually happen among our examples.
	\begin{table}[t]
		\begin{minipage}{.23\textwidth}
			\centering
			\vartableOne
		\end{minipage}
			\hfill
		\begin{minipage}{.3\textwidth}
			\centering
			\vartableTwo
		\end{minipage}
		\hfill
		\begin{minipage}{.23\textwidth}
			\centering
			\vartableThree
		\end{minipage}
		\caption{Dependency of the average bit size and the average running time of $\optsage$, with the number of variables, for fixed values of degree $d$ and number of terms $t$;
		For $n=8$ we observe a drastic drop both in running time and bit size.}
		\label{table:vartables}
	\end{table}
	For some selected parameters, we exhibit that behavior in \cref{table:vartables}.
%{\textbf{The memory requirement \emph{heavily} depends on the degree.]
%	To determine exact values for $\Vector{\nu}$, we solve the LP $\Matrix{Q}\Vector{\nu} = \Vector{0}$, $\left\|\Vector{\nu} - \operatorname{round}\left(\Vector{\tilde{\nu}}\right) \right\|_{\infty} \leq \varepsilon$.
%	The bit size of the exact values is bounded by \He{formula}.
%	Even for few variables, e.g. $n=4$, but high degree, we often reached the memory bound of 16GB RAM, which caused the computation to crash.
%	\He{investigate, what is happening}
%

\textbf{Dependency of bit size and running time of degree and terms}
	\begin{table}[t]
		\centering
		\small{
			\tableDegreeTerm
		}
		\caption{Bit size (upper part) and running time (lower part) of $\optsage$ in dependency of the degree $d$ and the number of terms $t$ for up to 4 variables;
		A ``$\times$'' indicates, that we do not have instances with these parameters in our data set.}
		\label{table:degree_term}
	\end{table}
	To illustrate how bit size and running time of $\optsage$ vary for different degrees and numbers of terms, we restrict ourselves to at most 4 variables.
	Our numbers from the previous point show, that in these cases bit size and time are similar for fixed $(d,t)$, hence we may aggregate those instances.
	The results are shown in \cref{table:degree_term}.
	We can see that running time and bit size roughly have a linear dependency.
	On the one hand, their growth is quadratic in the number of terms, which matches with the growth of the problem size in \cref{problem:SAGE}.
	On the other hand, bit size and running time are basically unaffected by the degree.
	This shows that the bound, given in the worst case analysis, usually is not met.

\textbf{Quality of the rounding-projection}
	Our experiments verify that in the majority of cases the symbolical lower bound does not diverge far from the numerical bound.
	The detailed distribution is shown in \cref{figure:bar_plot_diff}.
	Most notably, in \roundBetter{} instances, the exact lower bound is even \emph{better} than the numerical bound.
	In \roundOkayPercentage{}\% of the instances, the exact bound differs by at most \roundOkayThreshold{} from the numerical value.
	Only in \roundBad{} instances the difference lies above \roundBadThreshold{}.
	Thus, in the clear majority of examples, the lower bound in exact arithmetic does not differ much from the numerical bound.
	Also, among the instances with large difference, it can also be that the numerical solution actually lies far away from an exact solution.
	So it is unclear, whether a large difference is due to bad behavior of the numerical solution, or a large error in the rounding algorithm.
	\begin{figure}[t]
		\centering
		
	\input{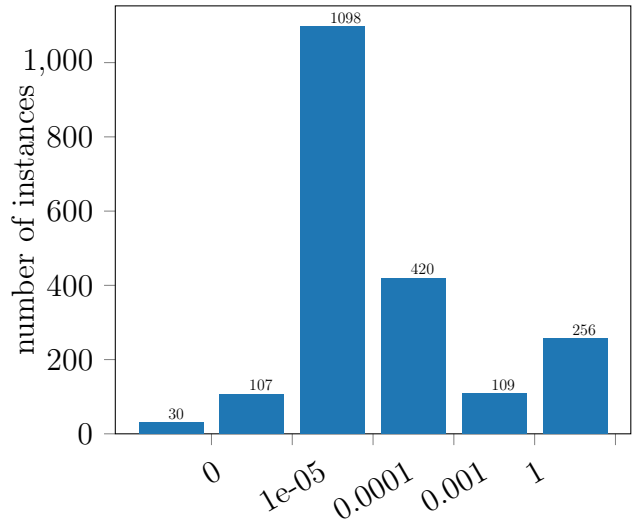}

		\caption{Number of instances where the difference of numerical lower bound and exact lower bound lies in the given interval; note that the exact bound sometimes is better.}
		\label{figure:bar_plot_diff}
	\end{figure}

\textbf{Rounding time versus solving time}
	In nearly every case the rounding procedure takes longer than the numerical solving.
	Only in \roundFaster{} instances, the rounding took less time.
	The ratio of the rounding time to the total time ranges from \roundRelLow{}\% to \roundRelHigh{}\%, with an average of \roundRelAvg{}\%.	
	However, one can implement the rounding procedure much closer to the hardware level, instead of working in Python.
	Thus, we expect that these ratios can be significantly improved.

\textbf{Comparison between SONC and SAGE}
	In their qualitative behavior, $\optsonc$ and $\optsage$ are similar.
	However, $\optsonc$ runs faster and has smaller certificates than $\optsage$, as shown in \cref{table:SONC_SAGE}.
	But one should note that $\optsonc$ only computes \emph{some} lower bound (not necessarily the optimal SONC-bound), whereas $\optsage$ computes the best bound, that can be obtained via this approach.
	Still it shows, that for very large instances, SONC is the method of choice, when other approaches fail due to the problem size.

\textbf{Comparison with SOS} 
	For polynomials lying in the interior of the SOS cone from~\cite[Table~2]{magron2018exact}, we performed preliminary experiments with $\optsage$  and  $\optsage$, which are currently unable to provide nonnegativity certificates. 
	For benchmarks from our database with $n\geq 8$ and $d \geq 10$, $\realcertify$ often fails to provide SOS certificates.
	We plan to provide detailed experimental comparisons with SOS methods in the future.
	
	\begin{table}
		\centering
		\tableSoncSageAlt
		\caption{Comparison of running time and bit size of the certificates between $\optsonc$ and $\optsage$; $\optsonc$ runs faster and has significantly smaller certificates than $\optsage$.}
		\label{table:SONC_SAGE}
	\end{table}

\section{Conclusion and Outlook}
We make two main contributions in this paper.
First, we present an algorithm to decide whether a given multivariate polynomial over the rationals lies in the interior of the SAGE cone.
If that is the case, then the algorithm also computes a certificate in exact arithmetic.
Additionally, we analyze the arithmetic complexity of the algorithm, which is \revise{singly exponential} in the degree and the number of terms, and \revise{doubly exponential} in the number of variables.
Second, we use our numerical methods to obtain lower bounds via SAGE and apply a single iteration of the rounding-projection method, to obtain an exactly certified lower bound.
This method, we run on a large number of test cases.
Based on these experiments, we draw the following conclusions.
\begin{enumerate}
	\item In the majority of cases, the exact solution lies close to the numerical solution, with a difference of at most \roundOkayThreshold{}.
		For few instances, the exact lower bound is even better than the numerical one.
	\item The running time and the bit size grow quadratically in the number of terms, which corresponds to the growth of the problem size.
	\item For the investigated parameters, increasing the degree or the number of variables does not increase the running time or the bit size.
		This also corresponds to the fact, that the size of the REP is independent of the degree and the number of variables.
	\item For very large instances, SONC should be the first choice, to obtain a certified bound, since it runs significantly faster than the other methods.
\end{enumerate}
For future work, the most interesting development would be to have an REP-solver with arbitrary precision, so that we can actually implement $\intsage$ and compare it to similar approaches.
Furthermore, in a significant amount of instances, we encountered computational problem, when calling \optsonc{} or \optsage{}.
So we would like to increase the robustness of our implementation.
Another issue, we have left out so far, is the presence of exponents of monomial non-squares, which lie on a face of the Newton polytope, that does not include the origin.
These result in values $\lambda_{1,j}=0$ (SONC) or $\nu_1^{(j)}=0$ (SAGE), so our computation is undefined.
However, these problems can be circumvented and we plan to do so in a future version of the software. 
Next, we plan to extend our framework to constrained problems and provide more detailed experimental comparisons with SOS-based approaches from~\cite{PaPe08,magron2018exact}, as well as with  methods based on critical points and cylindrical algebraic decomposition. 
Given a polynomial in the interior of the SAGE cone, our decision algorithm $\intsage$ is linear with respect to the distance of this polynomial to the border of the cone.
In order to improve this bound, one could refine the bit size analysis for this  distance. 
A further theoretical aim would be to analyze the boolean running time of $\intsage$, which requires to prove bit complexity estimates for relative entropy optimization problems.

\bibliographystyle{halphainit}
\bibliography{exactsonc}
\newpage
\appendix
\section{Appendix}
\subsection{Proof of~\cref{lemma:nuc}.}
\label{appendix:proof}
%\begin{proof}[]
	We start with the first claim.
	By~\cref{lemma:sage_distance}, there exist $\tilde{\nub},\tilde{\Vector{c}}$ and $\varepsilon \in \Q_{>0}$, with 
%	$\tau(\varepsilon) \in \bigo{ (\tau \cdot (4d+6)^{3n+3})}$
\revise{$\tau(\varepsilon) \in \bigo{ ( \tau_{\sage}(f)  )}$},  such that $(\tilde{\nub},\tilde{\Vector{c}})$ satisfies 	
	$\sum_{j=1}^t \Vector{c}^{(j)} = \Vector{b}$, 
	$\tilde{\Vector{c}}_{\setminus j}^{(j)},	\tilde{\nub}_{\setminus j}^{(j)} > \Vector{0}$, $D\left(\tilde{\nub}_{\setminus j}^{(j)}, e \tilde{\Vector{c}}_{\setminus j}^{(j)}\right) + \varepsilon \leq \tilde{c}_j^{(j)}$, for all $j=1,\dots,t$.
	Let us define $\delta := \frac{\varepsilon}{2 (t-1)}$ and $\Vector{c}$ such that $c_i^{(j)} := \tilde{c}_i^{(j)} + \frac{\varepsilon}{2} $ for all $i\neq j=1,\dots,t$, and $c_j^{(j)} := \tilde{c}_j^{(j)} - t \delta$, for $j=1,\dots,t$. 
	Thus, one has $\sum_{j=1}^t \Vector{c}^{(j)} = \Vector{b}$. For all $j=1,\dots,t$, one has $\nub^{(j)}_{\setminus j} > \Vector{0}$. 
	Combining this together with the fact that the $\log$ function is increasing yields 
	\[
		D\left({\nub}_{\setminus j}^{(j)}, e \left({\Vector{c}}_{\setminus j}^{(j)} + \delta \oneb\right) \right) + \frac{\varepsilon}{2}  \leq D\left({\nub}_{\setminus j}^{(j)}, e {\Vector{c}}_{\setminus j}^{(j)} \right)  \leq  \tilde{c}_j^{(j)} - \frac{\varepsilon}{2}  = c_j^{(j)}  \,.
	\] 
	Using  that for all $d \geq 2$, 
	\begin{align*}
		t &\leq \binom{n+d}{n} = \frac{(n+d) \cdots (d+1)}{n!} =
		\left(1+\frac{d}{n}\right)\left(1+\frac{d}{n-1}\right)\cdots(1+d)\\
		&\leq d^{n-1} (1+d) \leq 2 d^n \,,
	\end{align*}
	one has $\tau(t) \in \bigo{(n \log_2(d))}$.
	Since $\tau(\delta) \leq \tau(\varepsilon) + \tau(t)$, we obtain the first claim.
	
	To prove the second claim, we rely on the following three auxiliary inequalities. 
	
	For $\nu, \delta$ such that $0< \nu < \delta \leq \frac{1}{2}$, one has
	\begin{equation}
		\label{eq:ineq1}
		(\nu + \delta) \log (\nu + \delta) \leq 0 \,,
	\end{equation}
	since $0 < \nu + \delta < 1$ and the function $x \mapsto x \log x $ is negative on $(0, 1)$.
	
	For $\nu, \delta$ such that $0 < \delta \leq \nu$, one has
	\begin{equation}
		\label{eq:ineq2}
		(\nu + \delta) \log (1 + \delta / \nu) \leq 2 \delta \,,
	\end{equation}
	since $\nu + \delta \leq 2 \nu$ and  $\log (1 + \delta / \nu) \leq \delta / \nu$.
	
	For each $\nu, \delta, c > 0$ such that $\nu \geq 1$, one has
	\begin{equation}
		\label{eq:ineq3}
		\delta \log \left(\frac{\nu}{e c}\right) \leq \delta \max \left\{0, \nu \log \left(\frac{\nu}{e c}\right) \right\} \,.
	\end{equation}
	Indeed, if $\nu \leq e c$, the left hand side is less than 0. Otherwise, $\nu \geq 1$ implies that $\log (\frac{\nu}{e c}) \leq \nu \log (\frac{\nu}{e c})$.

	Now, by the first claim, there exist $\varepsilon_{\Vector{c}}, \varepsilon \in \Q_{>0}$, with \revise{$\tau(\varepsilon_{\Vector{c}}), \tau(\varepsilon) \in \bigo{(\tau_{\sage}(f) + \log t )}$}, and $\tilde{\nub},\tilde{\Vector{c}}$ satisfying	
	$\tilde{\nub}_{\setminus j}^{(j)} > \Vector{0}$, $\tilde{\Vector{c}}_{\setminus j}^{(j)} > \varepsilon_{\Vector{c}} \oneb$,   and $D\left(\tilde{\nub}_{\setminus j}^{(j)}, e \tilde{\Vector{c}}_{\setminus j}^{(j)}\right) + \varepsilon \leq \tilde{c}_j^{(j)}$, for all $j=1,\dots,t$. 
	For each $\delta \in \Q_{>0}$ with $\delta \leq \frac{1}{2}$, and all $j=1,\dots,t$, one has:
	\begin{align*}
		D\left(\tilde{\nub}_{\setminus j}^{(j)} + \delta \oneb, e \tilde{\Vector{c}}_{\setminus j}^{(j)}\right) = 
		\sum_{i \neq j} (\tilde{\nu}_i^{(j)} + \delta) \log \frac{\tilde{\nu}_i^{(j)} + \delta}{e \tilde{c}_i^{(j)}}  \,.
	\end{align*}
	We give an upper bound of each summand of the right hande side, depending on the value of $\tilde{\nu}_i^{j}$, for all $i,j=1,\dots,t$ and $i \neq j$. Note that $- \log (e c_i^{j}) \leq \log (e \varepsilon_{\Vector{c}})$, for all $i,j=1,\dots,t$ and $i \neq j$.
	
	If $0< \tilde{\nu}_i^{j} \leq \delta \leq \frac{1}{2}$, one has by~\cref{eq:ineq1}
	\begin{align*}
		(\tilde{\nu}_i^{(j)} + \delta) \log \frac{\tilde{\nu}_i^{(j)} + \delta}{e \tilde{c}_i^{(j)}} 
		&\leq (\tilde{\nu}_i^{(j)} + \delta) \log (\tilde{\nu}_i^{(j)} + \delta) - (\tilde{\nu}_i^{(j)} + \delta) \log (e \tilde{c}_i^{(j)}) \\
		&\leq 2 \delta \log (e \varepsilon_{\Vector{c}})	\,.
	\end{align*}
	If $\delta \leq \tilde{\nu}_i^{j} \leq 1$, one has 
	\begin{align*}
		(\tilde{\nu}_i^{(j)} + \delta) &\log \frac{\tilde{\nu}_i^{(j)} + \delta}{e \tilde{c}_i^{(j)}} \\
		& \leq \tilde{\nu}_i^{(j)} \log \frac{\tilde{\nu}_i^{(j)}}{e \tilde{c}_i^{(j)}} +
		 \delta \log \frac{\tilde{\nu}_i^{(j)}}{e \tilde{c}_i^{(j)}} + (\tilde{\nu}_i^{(j)} + \delta) \log \left(1 + \frac{\delta}{\tilde{\nu}_i^{(j)}}\right)	\\
		& \leq \tilde{\nu}_i^{(j)} \log \frac{\tilde{\nu}_i^{(j)}}{e \tilde{c}_i^{(j)}} + 2 \delta  \log (e \varepsilon_{\Vector{c}}) + 2 \delta \,,
	\end{align*}
	where we use the fact that $\delta \log \tilde{\nu}_i^{(j)} \leq 0$ and bound the last term of the right hand side via~\cref{eq:ineq2}.
	
	If $\tilde{\nu}_i^{j} \geq 1$, we write the first inequality as in the former case and obtain
	\[
		(\tilde{\nu}_i^{(j)} + \delta) \log \frac{\tilde{\nu}_i^{(j)} + \delta}{e \tilde{c}_i^{(j)}} \leq 
		\tilde{\nu}_i^{(j)} \log \frac{\tilde{\nu}_i^{(j)}}{e \tilde{c}_i^{(j)}} + \delta \max \left\{0, \tilde{\nu}_i^{(j)} \log \frac{\tilde{\nu}_i^{(j)}}{e \tilde{c}_i^{(j)}} \right\} + 2 \delta \,,
	\]
	where we rely on~\cref{eq:ineq3} to bound the second term  and the fact that $\delta + \tilde{\nu}_i^{(j)} \leq 2 \tilde{\nu}_i^{(j)}$ together with $ \log \left(1 + \frac{\delta}{\tilde{\nu}_i^{(j)}}\right) \leq \frac{\delta}{\tilde{\nu}_i^{(j)}}$ to bound the last term.
%	 \sum_{i \neq j} \delta \log \frac{\tilde{\nu}_i^{(j)}}{e \tilde{c}_i^{(j)}} + \sum_{i \neq j}  (\tilde{\nu}_i^{(j)} + \delta) \log (1 + \frac{\delta}{\tilde{\nu}_i^{(j)}})	
	
	In the worst case, we obtain
	\begin{align*}
	D\left(\tilde{\nub}_{\setminus j}^{(j)} + \delta \oneb, e \tilde{\c}_{\setminus j}^{(j)}\right) & \leq (1+2\delta) D\left(\tilde{\nub}_{\setminus j}^{(j)}, e \tilde{\c}_{\setminus j}^{(j)}\right) + 2 \delta (t - 1)  \vert \log (e \varepsilon_{\c}) \vert + 1) \\
	& \leq  \tilde{c}_j^{(j)} + 2 \delta \tilde{c}_j^{(j)} - (1 + 2 \delta)  \varepsilon + 2  \delta  t \vert \log (e \varepsilon_{\c}) \vert  \\
	& \leq \tilde{c}_j^{(j)}  - (1 + 2 \delta)  \varepsilon + 2 \delta (\vert b_j \vert +  t \vert \log (e \varepsilon_{\c}) \vert)
	\,,
	\end{align*}
	using $\tilde{c}_j^{(j)} = b_j - \oneb \cdot \tilde{c}_{\setminus j}^{(j)} \leq b_j$, thus $\tilde{c}_j^{(j)} \leq \max \{0, b_j \} \leq \vert b_j \vert$.
	
	To ensure that $D\left(\tilde{\nub}_{\setminus j}^{(j)} + \delta \oneb, e \tilde{\c}_{\setminus j}^{(j)}\right) + \frac{\varepsilon}{2} \leq \tilde{c}_j^{(j)}$, it is sufficient to have $- (1 + 4 \delta)  \varepsilon + 4  \delta   (\vert b_j \vert + t \vert \log (e \varepsilon_{\c}) \vert) \leq 0$, which is guaranteed by selecting the largest positive rational $\delta$ such that $\delta \leq \frac{\varepsilon}{4 (\vert b_j \vert + t \vert \log (e \varepsilon_{\c}) \vert - \varepsilon)}$. Since $\tau(\vert b_j \vert ) \leq \tau$, $\tau(t) \in \bigo{(d \log_2 n)}$, $\vert \log (e \varepsilon_{\c}) \vert) \leq \tau (\varepsilon_{\c})$, and $\tau (\varepsilon_{\c}), \tau (\varepsilon) \in \bigo{(\tau_{\sage}(f) + \log t )}$, one can select $\delta$ with bit size at most \revise{$\bigo{(\tau_{\sage}(f) + \tau + \log t )}$}.
	
	The proof of the third claim is very similar and we omit it for the sake of conciseness.
%\end{proof}
%
\newpage
\subsection{An Example of Exact SAGE Decomposition}
\label{appendix:example}
Let
\begin{align*}
	f(\x) = & \ 277 - 1 x_2^2 + 159 x_2^2 x_3^6 + 275 x_2^4 - 112 x_1^1 x_2^1 x_3^2 + 23 x_1^1 x_2^2 x_3^3 + 338 x_1^2 x_3^4 + 166 x_1^2 x_2^1 x_3^1\\
	&- 89 x_1^2 x_2^1 x_3^2 - 19 x_1^2 x_2^2 x_3^1 + 74 x_1^2 x_2^2 x_3^2 + 268 x_1^6 x_3^2. 
\end{align*}
Our optimization algorithm $\optsage$ returns $(\Matrix{\nu}, \Matrix{c})$ corresponding to the following exact rational SAGE decomposition:
%
%Decomposition: (including vector \nu for the AGE-certificate)
$f(\x) = \sum_{j=1}^{12} f_j(\x)$, where $f_j$ is the polynomial with coefficient vector $\Vector{c}^{(j)}$ for $j\in \{1,\dots,12\}$, $f_j = 0$ for $j \in \{1,3,4,7,11,12\}$, and 

\tiny{
\begin{align*}
\begin{aligned}
\Vector{\nu}^{(2)} =& \ \left(\frac{1494563}{131072}, -\frac{1494563}{65536}, 0, \frac{1494563}{131072}, 0, 0, 0, 0, 0, 0, 0, 0\right)\\
f_2 =& \ \frac{7050}{5161} - \frac{47048170074075847768}{2063044386600414657} x_2^2 + \frac{51364821929347737990}{176335433100237218699413} x_2^2 x_3^6 + \frac{1284035788022670567885}{13490375489389305583} x_2^4 +\\ & \frac{83062810507300624}{22200197261503152705103} x_1^1 x_2^1 x_3^2 + \frac{16345503627618}{597428155608650885} x_1^1 x_2^2 x_3^3 + 
\frac{151893109261090080}{12048313292258664398599} x_1^2 x_3^4 + \\
&\frac{6887262866990276056}{3068799030848037634611567} x_1^2 x_2^1 x_3^1 + \frac{584984347065145672}{102911615960676281785795} x_1^2 x_2^1 x_3^2 + \frac{35180881401784}{5417152094682478713} x_1^2 x_2^2 x_3^1 +\\
& \frac{910066980468240}{55380305977850685287} x_1^2 x_2^2 x_3^2 + \frac{7052856072901897195}{672038441901303429467107} x_1^6 x_3^2\\
\Vector{\nu}^{(5)} =& \ \biggl(\frac{2763713}{131072}, \frac{1404885}{65536}, \frac{40783}{131072}, \frac{2618913}{131072}, -\frac{1033291}{8192}, \frac{46873}{65536}, \frac{8086635}{131072}, \frac{2825}{16384}, \frac{16731}{65536}, \frac{5689}{32768}, \frac{4177}{16384},  \frac{3431}{65536}\biggr)\\
f_5 =& \ \frac{4895}{6532} + \frac{397583449413593513}{62516496563648929} x_2^2 + \frac{4282065321967986445380570}{176335433100237218699413} x_2^2 x_3^6 + \frac{5340068537046034265265}{107923003915114444664} x_2^4 - \\
&\frac{797551428730610634080}{6323041088437240873} x_1^1 x_2^1 x_3^2 + \frac{624998409184656940}{119485631121730177} x_1^1 x_2^2 x_3^3 + \frac{2505598243831220279021280}{12048313292258664398599} x_1^2 x_3^4 +\\ 
& \frac{317731662232293486548800}{3068799030848037634611567} x_1^2 x_2^1 x_3^1 + \frac{7986264919496237716952}{20582323192135256357159} x_1^2 x_2^1 x_3^2 + \frac{4894573585100619175}{16251456284047436139} x_1^2 x_2^2 x_3^1 +\\
& \frac{62000736112963044488}{55380305977850685287} x_1^2 x_2^2 x_3^2 + \frac{98753640749442855613080}{672038441901303429467107} x_1^6 x_3^2\\
\Vector{\nu}^{(6)} =& \ \biggl(\frac{24235}{65536}, \frac{11469}{16384}, \frac{731111}{65536}, \frac{413881}{65536}, \frac{75499}{65536}, -\frac{2262417}{65536}, \frac{224607}{65536}, \frac{48937}{65536}, \frac{41999}{32768}, \frac{5895}{4096}, \frac{459131}{65536}, \frac{30411}{32768}\biggr) \\
f_6 =& \ \frac{11}{6124} + \frac{175751812330814062}{6189133159801243971} x_2^2 + \frac{20993411202657969962132460}{176335433100237218699413} x_2^2 x_3^6 + \frac{230789345356723546855}{107923003915114444664} x_2^4 + \\
&\frac{3497123471444427592320}{22200197261503152705103} x_1^1 x_2^1 x_3^2 -  \frac{20624391427747375598}{597428155608650885} x_1^1 x_2^2 x_3^3 + \frac{19031996317336046221320}{12048313292258664398599} x_1^2 x_3^4 +\\
& \frac{188140322704886960215808}{3068799030848037634611567} x_1^2 x_2^1 x_3^1 + \frac{27412383525960721370432}{102911615960676281785795} x_1^2 x_2^1 x_3^2 + \frac{616440189731061280}{1805717364894159571} x_1^2 x_2^2 x_3^1 + \\
&\frac{232955118417446202960}{55380305977850685287} x_1^2 x_2^2 x_3^2 + \frac{239384185705575737209071}{672038441901303429467107} x_1^6 x_3^2\\
\Vector{\nu}^{(8)} =& \ \biggl(\frac{5736955}{131072}, \frac{1718597}{65536}, \frac{2677}{65536}, \frac{283517}{16384}, \frac{298687}{65536}, \frac{5897}{32768}, \frac{1205677}{131072}, -\frac{1398845}{8192}, \frac{338517}{65536}, \frac{488327}{32768}, \frac{37365}{8192},  \frac{1469765}{32768}\biggr)\\
f_8 =& \ \frac{16553}{6389} + \frac{80194631513313539404}{6189133159801243971} x_2^2 + \frac{936097366115717531112852}{176335433100237218699413} x_2^2 x_3^6 + \frac{962838001240718755020}{13490375489389305583} x_2^4 +\\
& \frac{168516279078159280107840}{22200197261503152705103} x_1^1 x_2^1 x_3^2 + \frac{261920871973041930}{119485631121730177} x_1^1 x_2^2 x_3^3 + \frac{622190516331822168107520}{12048313292258664398599} x_1^2 x_3^4 -\\
& \frac{174673447091258679729208576}{1022933010282679211537189} x_1^2 x_2^1 x_3^1 + \frac{1345600783114821896667952}{102911615960676281785795} x_1^2 x_2^1 x_3^2 + \frac{233241950370990311440}{5417152094682478713} x_1^2 x_2^2 x_3^1 +\\ 
& \frac{1847380121666164588230}{55380305977850685287} x_1^2 x_2^2 x_3^2 +  \frac{140920372909036228167589650}{672038441901303429467107} x_1^6 x_3^2\\
\end{aligned}
\end{align*}
}
\newpage
\tiny{
\begin{align*}
\begin{aligned}
\Vector{\nu}^{(9)} =& \ \biggl(\frac{942879}{131072}, \frac{66451}{8192}, \frac{13687}{131072}, \frac{291521}{32768}, \frac{487245}{65536}, \frac{29821}{65536}, \frac{2150473}{65536}, \frac{252191}{32768}, -\frac{3460417}{32768}, \frac{170395}{16384}, \frac{548667}{65536}, \frac{925733}{65536}\biggr)\\
f_9 =& \ \frac{181}{1076} + \frac{3266564780864805056}{2063044386600414657} x_2^2 + \frac{945568751423587451936415}{176335433100237218699413} x_2^2 x_3^6 + \frac{195552867021926712375}{13490375489389305583} x_2^4 +\\
& \frac{108598341793757685149280}{22200197261503152705103} x_1^1 x_2^1 x_3^2 + \frac{261624143096996295}{119485631121730177} x_1^1 x_2^2 x_3^3 + \frac{876811143650234639492646}{12048313292258664398599} x_1^2 x_3^4 + \\
&\frac{9330362356036127696945600}{3068799030848037634611567} x_1^2 x_2^1 x_3^1 - \frac{10867831413671203786608456}{102911615960676281785795} x_1^2 x_2^1 x_3^2 + \frac{192909634965734635520}{16251456284047436139} x_1^2 x_2^2 x_3^1 + \\
& \frac{1339551331036614124520}{55380305977850685287} x_1^2 x_2^2 x_3^2 + \frac{17532007073614874704137330}{672038441901303429467107} x_1^6 x_3^2 \\
\Vector{\nu}^{(10)} =& \ \biggl(\frac{732091}{262144}, \frac{167933}{32768}, \frac{14541}{131072}, \frac{7773365}{262144}, \frac{1327}{512}, \frac{29683}{65536}, \frac{272641}{131072}, \frac{293905}{65536}, \frac{214965}{65536}, -\frac{4886959}{65536}, \frac{144047}{32768},  \frac{1284635}{65536}\biggr)\\
f_{10} =& \ \frac{157}{2744} + \frac{5424537902241772912}{6189133159801243971} x_2^2 + \frac{880139855950527034906380}{176335433100237218699413} x_2^2 x_3^6 + \frac{571069367996398273530}{13490375489389305583} x_2^4 +\\ 
&\frac{33169145578648933133280}{22200197261503152705103} x_1^1 x_2^1 x_3^2 + \frac{228162076394260360}{119485631121730177} x_1^1 x_2^2 x_3^3 + \frac{48697840759706172793616}{12048313292258664398599} x_1^2 x_3^4 +\\
& \frac{1587820308255205569373114}{1022933010282679211537189} x_1^2 x_2^1 x_3^1 + \frac{59150503389680767180777}{20582323192135256357159} x_1^2 x_2^1 x_3^2 - \frac{403951932103643744176}{5417152094682478713} x_1^2 x_2^2 x_3^1 + \\
&\frac{616254425060782282800}{55380305977850685287} x_1^2 x_2^2 x_3^2 + \frac{21315777567587124730738350}{672038441901303429467107} x_1^6 x_3^2\\
\end{aligned}
\end{align*}
}

\end{document}